\theoremstyle{definition}
\newtheorem{corollary}{Corollary}
\newtheorem{definition}{Definition}
\newtheorem{lemma}{Lemma}
\newtheorem{proposition}{Proposition}
\newtheorem{theorem}{Theorem}
\newtheorem*{remark}{Remark}
\newcounter{mycounter}
\newtheorem{lemma-Appendix}[mycounter]{Lemma}
\newcommand{\mbf}{\mathbf}
\newcommand{\mbb}{\mathbb}
\newcommand{\mrm}{\mathrm}
\newcommand{\mc}{\mathcal}
\newcommand{\tr}{\textrm{Tr}}
\newcommand{\ket}[1]{|#1\rangle}
\newcommand{\bra}[1]{\langle #1|}
\newcommand{\op}[2]{|#1\rangle\langle #2|}
\newcommand{\pd}[2]{\frac{\partial #1}{\partial #2}}
\newcommand{\x}{\mathbf{x}}
\newcommand{\rnd}{\text{rnd}}
\newcommand{\deltac}{\overline{\Delta \zeta}}
\newcommand{\deltacs}{\overline{\Delta \zeta^{\star}}}
\newcommand{\T}{\text{T}}
\newcommand{\tot}{\text{tot}}
\definecolor{cool_green}{rgb}{0.0, 0.5, 0.0}
\definecolor{navy_blue}{rgb}{0,0,0.5}
\begin{document}

\title{The Round Complexity of Local Operations and Classical Communication (LOCC) in Random-Party Entanglement Distillation}

\author{Guangkuo Liu}
\affiliation{Department of Physics, University of Illinois at Urbana-Champaign, Urbana, Illinois  61801,USA}
\affiliation{JILA, University of Colorado/NIST, Boulder, CO, 80309, USA}
\affiliation{Department of Physics, University of Colorado, Boulder CO 80309, USA} 
\author{Ian George}
\affiliation{Department of Electrical and Computer Engineering, Coordinated Science Laboratory, University of Illinois at Urbana-Champaign, Urbana, Illinois 61801, USA}
\affiliation{Illinois Quantum Information Science and Technology (IQUIST) Center,
University of Illinois Urbana-Champaign, Urbana, IL 61801}
\author{Eric Chitambar}
\affiliation{Department of Electrical and Computer Engineering, Coordinated Science Laboratory, University of Illinois at Urbana-Champaign, Urbana, Illinois 61801, USA}
\affiliation{Illinois Quantum Information Science and Technology (IQUIST) Center,
University of Illinois Urbana-Champaign, Urbana, IL 61801}

\date{March 28, 2023}

\begin{abstract}
    A powerful operational paradigm for distributed quantum information processing involves manipulating pre-shared entanglement by local operations and classical communication (LOCC).  The LOCC round complexity of a given task describes how many rounds of classical communication are needed to complete the task.  Despite some results separating one-round versus two-round protocols, very little is known about higher round complexities.  In this paper, we revisit the task of one-shot random-party entanglement distillation as a way to highlight some interesting features of LOCC round complexity.  We first show that for random-party distillation in three qubits, the number of communication rounds needed in an optimal protocol depends on the entanglement measure used; for the same fixed state some entanglement measures need only two rounds to maximize whereas others need an unbounded number of rounds.  In doing so, we construct a family of LOCC instruments that require an unbounded number of rounds to implement.  We then prove explicit tight lower bounds on the LOCC round number as a function of distillation success probability.  Our calculations show that the original W-state random distillation protocol by Fortescue and Lo is essentially optimal in terms of round complexity.
\end{abstract}

\maketitle

\section{Introduction}

Quantum entanglement is an essential ingredient for realizing the full capabilities of distributed quantum information processing.  To understand the research character of entanglement, one typically considers the ``distant-lab'' paradigm in which spatially separated laboratories have access to different parts of a globally-entangled states, but they can only process it using local operations and classical communication (LOCC) \cite{Plenio-2007a, Horodecki-2009a}.   In this scenario, parties take turns performing a local quantum measurement, announcing the outcome, and choosing a new local measurement for the next round based on the global history of previous results.  Every LOCC protocol then has a tree-like structure with each branch in the tree representing a different sequence of measurement outcomes.  Unfortunately, for many tasks such as entanglement distillation \cite{Bennett-1996a, Devetak-2005a} or state discrimination \cite{Peres-1991a, Bennett-1999a, Kleinmann-2011a, Childs-2013a, Chitambar-2013a}, precise bounds on LOCC capabilities are hard to prove.  One reason for this difficulty is that a potentially unbounded number of classical communication exchanges are allowed in an LOCC protocol.  

The largely unexplored topic of LOCC round complexity studies how many rounds of classical communication are needed to perform some distributed quantum information task.  We will say that an LOCC protocol has $r$ rounds if there is at least one branch having a sequence of local measurements alternating $r$ times between different parties (see Section \ref{Sect:lower-bound-EPR} for a slightly more general definition).  Note that each local measurement might involve a combination of measurements all performed in the same laboratory.   Most results on LOCC round complexity involve quantum state discrimination problems and focus just on separating one-round versus two-round success rates \cite{Owari-2008a, Nathanson-2013a, Chitambar-2013a, Chitambar-2014a, Croke-2017a, Tian-2016a, Yuan-2020a, Yang-2021a}.  Much more challenging is proving that a given task requires more than two rounds of LOCC to complete, and only a few results are known of this form.  Wakakuwa \textit{et al.} have proven separation results between two and three rounds for the task of entanglement-assisted nonlocal gate implementation \cite{Wakakuwa-2017a, Wakakuwa-2019a}.  For higher rounds, Wang and Duan have constructed families of bipartite quantum states having increasing dimension but requiring more rounds to perfectly discriminate the states as the dimension grows \cite{Xin-2008a}.  A similar finding has been shown for the convertibility of certain quantum (resp. classical) mixed states \cite{Chitambar-2017a}.  
On the other hand, some round compression results are known.  For example, in the task of LOCC state conversion, any bipartite protocol can always be reduced to one-way LOCC provided the initial state is pure, regardless of the system's dimensions \cite{Lo-2001a, Nielsen-1999a, Vidal-1999a}.  In another direction, Cohen has shown that certain LOCC tasks of high round complexity can be approximated arbitrarily well using one-way LOCC \cite{Cohen-2015a}.   Complementing the work on round or ``depth'' compression, it has recently been shown that the ``width'' of an arbitrary LOCC state discrimination protocol can be compressed to a standard form \cite{Leung-2021a}.  One could in general study the interplay between depth and width complexities in different LOCC tasks.  

The purpose of this work is to further advance the study of LOCC round complexity.  To exemplify certain new facts about round complexity, we focus on the specific task of random-party entanglement distillation in three-qubit W-class states \cite{Fortescue-2007a}, which has already proven to be a fruitful problem for demonstrating interesting properties of LOCC \cite{Cui-2011a, Chitambar-2011a, Chitambar-2012a, Chitambar-2012b}.  A W-class state $\ket{\Psi_W}$ is one that can be obtained from the canonical W state $\ket{W}=\frac{1}{\sqrt{3}}(\ket{100}+\ket{010}+\ket{001})$ by stochastic LOCC (SLOCC) \cite{Dur-2000a}.  A random-party EPR distillation protocol transforms a tripartite W-class state $\ket{\Psi_W}$ into a bipartite EPR state $\ket{\Psi^+}=\frac{1}{\sqrt{2}}(\ket{01}+\ket{10})$ with the target pair unspecified; any branch in the protocol is deemed a success branch provided $\ket{\Psi^+}$ is obtained between some pair of parties at the end of the branch.   Fortescue and Lo devised a family LOCC protocols with increasing round number (details given below) that completes this task with success probabilities approaching one.  The limit of such protocols is then some map that distills $\ket{W}$ into random-party EPR pairs with probability one.  However, it has been shown \cite{Chitambar-2012a} that this map or any map achieving unit success probability is not implementable by LOCC.  Consequently, LOCC constitutes a class of quantum operations that is not closed.  This argument was later extended in Ref. \cite{Chitambar-2014b} to show that the set of bipartite LOCC maps is likewise not topologically closed.  Related topological properties have been reported in Ref. \cite{Childs-2013b, Cohen-2015a, Cohen-2017a}, and we also note that an experimental demonstration of random distillation has recently been performed using a three-photon W-state \cite{Li-2020a}. 

Variations to the random-party EPR distillation task can also be considered.  For example, one can demand that the final states obtained in each success branch be just some entangled bipartite state $\ket{\phi}$, and not necessarily a maximally entangled one \cite{Fortescue-2008a, Chitambar-2011a}.  If $E$ is some measure of bipartite entanglement, the goal then is to maximize the average final value $E$ across all success branches.  More precisely, we define
\begin{equation}
\label{Eq:entanglement-random}
    E^{(\rnd)}(\Psi_W):=\sup_{\Psi_W\to\{p_i,\varphi_i\}}\sum_i p_iE(\varphi_i),
\end{equation}
where the supremum is taken over all finite-round LOCC transformations that convert $\ket{\Psi_W}$ into some bipartite state $\ket{\varphi_i}$ with probability $p_i$.  A convenient entanglement measure for bipartite systems is the concurrence \cite{Wootters-1998a}, which for a pure state $\ket{\varphi}^{XY}$ is defined as $C(\varphi)=2\left(\det[\tr_X\op{\varphi}{\varphi}]\right)^{1/2}$.  Thus, $C^{(\rnd)}(\Psi_W)$ is the largest average pure-state concurrence obtainable from $\ket{\Psi_W}$.  We will refer to the task of optimizing the expected bipartite concurrence as random-party concurrence distillation.  Another entanglement measure is given $E_2(\varphi)=2\lambda_{\min}[\tr_X\op{\varphi}{\varphi}]$, which is twice the smaller eigenvalue of the reduced density matrix $\tr_X\op{\varphi}{\varphi}$.  Operationally, it is known that $E_2(\varphi)$ corresponds to the supremum probability of transforming the state $\ket{\varphi}$ to an EPR state by LOCC \cite{Lo-2001a}.  From this it immediately follows that $E_2^{(\rnd)}(\Psi_W)$ is equal to the supremum probability of obtaining an EPR pair between any two parties starting from $\ket{\Psi_W}$, and so the optimization in Eq. \eqref{Eq:entanglement-random} can be restricted to LOCC transformations with only EPR states being the target, i.e. the original task of random-party EPR distillation.

A second variation to the random-party EPR distillation task is to fix one of the parties ($\star$) and only deem the transformation a success if party $\star$ is entangled in the end.  We refer to this as $\star$-random-party distillation, as opposed to \textit{total} random-party distillation described above.  For a fixed party $\star\in\{A,B,C\}$, the optimal average bipartite entanglement for party $\star$ is
\begin{equation}
\label{Eq:entanglement-star-random}
    E^{(\star\text{-}\rnd)}(\Psi_W):=\sup_{\Psi_W\to\{p_i,\varphi_i\}}\sum_i p_iE(\varphi_i),
\end{equation}
where the supremum is taken over all LOCC transformations in which each state $\ket{\varphi_i}$ is bipartite entangled between party $\star$  and some other party in $\{A,B,C\}\setminus\{\star\}$.  Restricted random-party distillation has been partially studied for the concurrence and $E_2$ entanglement measures in Refs. \cite{Chitambar-2014b} and \cite{Chitambar-2012b}, respectively.

In terms of round complexity, one of our main results is that $E_{2}^{(\star\text{-}\rnd)}(\Psi_W)$ is achievable in finite rounds of LOCC for any tripartite W-state, whereas $C^{(\star\text{-}\rnd)}(\Psi_W)$ is not.  Hence, the amount of rounds needed in an optimal entanglement distillation protocol depends on the type of entanglement measure considered.  What makes this result particularly surprising is that concurrence and $E_2$ measures are in one-to-one correspondence: $E_2(\ket{\varphi})=1+\sqrt{1-C(\ket{\varphi})^2}$.
Our second main result is establishing a tight lower bound on the number of LOCC rounds needed to achieve a random-party EPR distillation of $\ket{W}$ with probability $>1-\delta$, for any $\delta>0$.  To our knowledge, this is the first time any type of trade-off has been obtained between round complexity and success probability of a given distributed quantum information processing task.  This result also shows that the original Fortescue-Lo protocol is essentially optimal for random-party distillation of $\ket{W}$, and we prove optimality of more general types of transformations by enlarging the class of LOCC to encompass all operations that completely preserve positivity of the partial transpose (PPT) \cite{Rains-1999a}.  As a corollary of our work, we introduce a new family of quantum instruments that lie on the boundary of LOCC but not inside LOCC itself \cite{Chitambar-2014b}.

\section{The Structure of W-class States}

\label{Sect:W-class-structure}

In this paper, we refer to the W-class as the collection of three-qubit states that can be obtained from the canonical W state $\ket{W}$ by SLOCC.  Up to a local change in basis, every W-class state can be written as
\begin{align}
\ket{\mbf{x}}\!:=\!\sqrt{x_0}\ket{000}+\!\sqrt{x_A}\ket{100}+\!\sqrt{x_B}\ket{010}+\!\sqrt{x_C}\ket{001},\notag
\end{align}
with the $x_i$ being non-negative.  Normalization requires that $x_0=1-(x_A+x_B+x_C)$, and therefore the state $\ket{\mbf{x}}$ only depends on three non-negative parameter $\mbf{x}=(x_A,x_B,x_C)$.  If all three of these numbers are strictly positive, then the state is genuinely three-way entangled (i.e. it is not a product state with respect to some bi-partition of the three parties).  Moreover, in this case it can be shown that the vector $\mbf{x}$ \textit{uniquely} identifies the state \cite{Kintas-2010a}.  More precisely, if we express $\ket{\mbf{x}}$ in some other basis as \[\sqrt{x_0'}\ket{0'0'0'}+\sqrt{x_A'}\ket{1'0'0'}+\sqrt{x_B'}\ket{0'1'0'}+\sqrt{x_C'}\ket{0'0'1'}\]
then necessarily $x_k=x_k'$ for all components.  Equivalently stated, the components $x_k$ are invariant under local unitaries (LU).  This LU invariance, combined with the fact that the W-class is closed under LOCC, makes the W-class very attractive for studying properties of LOCC.  Indeed, if Alice, Bob, and Charlie start out sharing a W-class state $\ket{\mbf{x}}$, then any multi-outcome LOCC protocol on $\ket{\mbf{x}}$ can be described compactly by a probabilistic transformation of vectors,
\begin{equation}
    \mbf{x}\mapsto \mbf{y}_i\quad\text{with probability $p_i$},
\end{equation}
in which the $\ket{\mbf{y}_i}$ are different W-class states obtained along different branches of the protocol.

Every bipartite entangled state belongs to the W-class, and in canonical form, they have one and only one of the coordinates $\{x_A,x_B,x_C\}$ equaling zero.  For example, $\ket{\varphi}=\sqrt{x_0}\ket{000}+\sqrt{x_A}\ket{100}+\sqrt{x_B}\ket{010}$ is a bipartite entangled state shared between Alice and Bob whenever $x_A,x_B>0$.  Its concurrence is readily computed to be (see Appendix \ref{appendix:a})
\begin{equation}
\label{Eq:bipartite-concurrence}
    C(\varphi)=2\sqrt{x_Ax_B}.
\end{equation}
From this we see that the concurrence is homogeneous under multiplication by a non-negative scalar.  That is, $C(\alpha\varphi)=\alpha C(\varphi)$ for any $\alpha\geq 0$.  We also observe that $\ket{\varphi}$ is maximally entangled iff $x_A=x_B=\frac{1}{2}$ and $x_0=0$.

In an LOCC protocol, each party takes turns measuring their local system and announcing the result.  Since each party holds a qubit system, every local measurement can be described by a set of $2\times 2$ Kraus operators $\{M_i\}_i$.  Without loss of generality, we can assume that each $M_i$ has been brought into upper triangular form by a local unitary.  Indeed, if $U_iM_i$ is upper triangular, then we can always append $U_i^\dagger$ to the start of the next round of local measurement to recover the action of the original Kraus operator $M_i$.  Hence we will write the Kraus operators of every local measurement as
\begin{equation}
    M_i=\begin{pmatrix}\sqrt{a_i}&b_i\\0&\sqrt{c_i}\end{pmatrix},
\end{equation}
where $a_i,c_i\geq 0$.  The completion relation $\sum_{i}M_i^\dagger M_i=\mbb{I}$ requires that $\sum_ia_i=1$ and $\sum_i c_i\leq 1$.  When party $k$ performs a local measurement $\{M_i\}_i$ on the W-state $\ket{x}$ and obtains outcome $i$, then the post-measurement state $\ket{\mbf{y}_i}=\frac{1}{\sqrt{p(i)}}M_i\ket{\mbf{x}}$ has undergone a coordinate transformation
\begin{align}
    x_k\mapsto y_{i,k}&=\frac{c_i}{p_i}x_k,\\
    x_j\mapsto y_{i,j}&=\frac{a_i}{p_i}x_j\qquad \text{for $j\not=k\text{ or } 0$},
\end{align}
where $p_i=\bra{\mbf{x}}M_i^\dagger M_i\ket{\mbf{x}}$.  Observe the monotonicity conditions \cite{Kintas-2010a}
\begin{align}
    \sum_{i}p_i y_{i,k}&=\sum_{i}c_ix_k\leq x_k\label{Eq:coordinate-transformation-measuring}\\
     \sum_{i}p_i y_{i,j}&=\sum_{i}a_ix_j = x_j \qquad\text{for $j\not= k\text{ or } 0$}.\label{Eq:coordinate-transformation-not-measuring}
\end{align}
In other words, the component of the measuring party is non-increasing on average where as the components of the non-measuring parties remain unchanged on average.  

An important property of any local measurement is that it can be decomposed into a sequence of weak measurements \cite{OB05}.  Hence, we can envision a given LOCC W-state conversion as a tree of smooth trajectories in the positive quadrant of $\mbb{R}^3$ taking the initial state $\mbf{x}$ to its possible target states $\{\mbf{y}_i\}_i$.  Consequently, if $x_k\mapsto y_{i,k}$ is the component transformation for party $k$ along one branch of some protocol, then there exists an equivalent sequence of protocols that passes through any desired coordinates between $x_k$ and $y_{i,k}$.   

Finally, we introduce three important functions of three-qubit W-class states that play an important role in this work.  For coordinates $(x_A,x_B,x_C)$ of state $\ket{\mbf{x}}$, let $n_1,n_2,n_3\in\{A,B,C\}$ be distinct party labels such that $x_{n_1}\geq x_{n_2}\geq x_{n_3}$ and let $n_i(\mbf{x}):=x_{n_i}$.  Suppose also that $x_{n_2}>0$ so that $\ket{\mbf{x}}$ is not a product state.  Then define
\begin{align}
    \eta(\mbf{x})&:=x_{n_2}+x_{n_3}-\frac{x_{n_2}x_{n_3}}{x_{n_1}}\notag\\
    \kappa(\mbf{x})&:=2(x_{n_2}+x_{n_3})-\frac{x_{n_2}x_{n_3}}{x_{n_1}}\notag\\
    \zeta(\mbf{x})&:=2\sqrt{x_{n_1}x_{n_2}}+\frac{2}{3}x_{n_3}\sqrt{\frac{x_{n_1}}{x_{n_2}}}+\frac{1}{3}\frac{x_{n_2}x_{n_3}}{x_{n_1}}.
\end{align}
With a slight abuse of the notation, we introduce one additional function.  Let $\star\in\{1,2,3\}$ be any fixed party, and take $n_1,n_2\in\{A,B,C\}\setminus\{\star\}$ as distinct party labels such that $x_{n_1}\geq x_{n_2}>0$.  Then define
\begin{align}
       \zeta^\star(\mbf{x}):=2\sqrt{x_{\star}x_{n_1}}+\frac{2}{3}x_{n_2}\sqrt{\frac{x_{\star}}{x_{n_1}}}.\phantom{xxxxxxxxxxx}
\end{align}

It has been shown that $\eta(\mbf{x})$, $\kappa(\mbf{x})$, and $\zeta^\star(\mbf{x})$ are entanglement monotones when manipulating W-class states under LOCC \cite{Chitambar-2012a, Chitambar-2012b, Chitambar-2014b}; i.e. 
\begin{equation}
    \eta(\mbf{x})\geq \sum_ip_i\eta(\mbf{y}_i)
\end{equation}
for any LOCC transformation $\ket{\mbf{x}}\to\{p_i,\ket{\mbf{y}_i}\}$, and likewise for $\kappa(\mbf{x})$ and $\zeta^\star(\mbf{x})$.  Notice that since the W-class is closed under LOCC \cite{Dur-2000a},  $\ket{\mbf{x}}\to\{p_i,\ket{\mbf{y}_i}\}$ describes the most general type of pure-state transformations possible by LOCC and hence we have monotonicity under arbitrary pure state transformations.  Below we show that the same is true for the $\zeta(\mbf{x})$, and we also reveal its operational meaning in terms of random-party concurrence distillation.

\section{Total Random-Party Distillation}
\label{Sect:Total-Random-Party-Distillation}

For the task of total random-party EPR distillation, the original Fortescue-Lo (F-L) protocol gives $E_2^{(\rnd)}(W)=1$ for the W state $\ket{W}$ \cite{Fortescue-2007a}.  The F-L protocol consists of each party performing the same binary-outcome measurement with Kraus operators
\begin{align}
\label{F-L-Kraus}
     &M_0^\epsilon=\sqrt{1-\epsilon}\op{0}{0}+\op{1}{1}, &M_1^\epsilon=\sqrt{\epsilon}\op{0}{0}.
\end{align}
Consider one iteration of the protocol in which each party performs this measurement.  If all three parties obtain outcome $M_0$, then the post-measurement state is $M_0\otimes M_0\otimes M_0\ket{W}\propto \ket{W}$; the W-state is left on changed, and they all repeat the same measurement in the next iteration of the protocol.  On the other hand if one party obtains outcome $M_1$ while the other two obtain $M_0$, then the latter are left in an EPR state.  The final possibility is that at least two parties obtain outcome $M_1$ which breaks all the entanglement in the state.  For any $\delta>0$, the value $\epsilon$ can be chosen sufficiently small and the number of iterations can be chosen sufficiently large such that an EPR state is obtained with probability $>1-\delta$.  In Section \ref{Sect:lower-bound-EPR} we identify a precise trade-off between the value $\delta$ and the number of rounds needed to obtain a success probability $>1-\delta$.  The F-L protocol was later generalized to other W-class states.  It was shown that 
\begin{equation}
\label{Eq:E2-random}
    E_2^{(\rnd)}(\mbf{x})=\kappa(\mbf{x})
\end{equation}
for any W-state $\ket{\mbf{x}}$ such that $x_0=0$ \cite{Chitambar-2012a}.  

Our goal now is to establish a similar result for concurrence distillation.
\begin{lemma}
\label{Lem:total-concurrence}
Let $\zeta(\mbf{x}):=2\sqrt{x_{n_1}x_{n_2}}+\frac{2}{3}x_{n_3}\sqrt{\frac{x_{n_1}}{x_{n_2}}}+\frac{1}{3}\frac{x_{n_2}x_{n_3}}{x_{n_1}}$, as introduced above.  Then
\begin{equation}
\label{Eq:C-random-lemma}
    C^{(\rnd)}(\mbf{x})=\zeta(\mbf{x})
\end{equation}
for any W-class state $\ket{\mbf{x}}$.  Furthermore, $\zeta(\mbf{x})$ is an entanglement monotone that strictly decreases whenever party $n_2$ or $n_3$ performs a non-trivial measurement.
\end{lemma}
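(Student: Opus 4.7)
The plan is to establish the identity $C^{(\mathrm{rnd})}(\mbf{x}) = \zeta(\mbf{x})$ in two halves and separately verify the strict-decrease clause. The upper bound, $C^{(\mathrm{rnd})}(\mbf{x}) \leq \zeta(\mbf{x})$, will follow from showing that $\zeta$ is an LOCC entanglement monotone on W-class states together with the observation that $\zeta$ coincides with the bipartite concurrence whenever one of the coordinates vanishes: setting $x_{n_3}=0$ in the definition of $\zeta$ kills the two correction summands and leaves exactly $2\sqrt{x_{n_1}x_{n_2}}$, matching Eq.~(6). Thus, for any LOCC protocol $\ket{\mbf{x}}\to\{p_i,\ket{\varphi_i}\}$ with bipartite targets, one has $\sum_i p_i C(\varphi_i) = \sum_i p_i \zeta(\varphi_i) \leq \zeta(\mbf{x})$.

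For the monotonicity of $\zeta$, I would exploit the decomposition of any local measurement into a chain of infinitesimal two-outcome measurements, which reduces the task to verifying non-increase along each infinitesimal trajectory. Fixing the measuring party $k\in\{n_1,n_2,n_3\}$ and using the upper-triangular Kraus form together with the coordinate transformation rules in Eqs.~(9)--(10), I would expand the two-outcome expectation of $\zeta$ to second order in the measurement strength and verify non-positivity. The computation must be broken into three cases according to whether $k$ currently holds the largest, middle, or smallest coordinate, because the definition of $\zeta$ depends on the sorted order. Two technical subtleties arise: (a) the ordering can swap along a trajectory, so I check continuity of $\zeta$ across the hypersurfaces $x_{n_i}=x_{n_{i+1}}$ and patch the inequality together piecewise; and (b) the factor $\sqrt{x_{n_1}/x_{n_2}}$ diverges near the bipartite boundary, so its gradient must be controlled using the homogeneity relation $\zeta(\alpha\mbf{x})=\alpha\zeta(\mbf{x})$ visible from the expression.

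For the matching lower bound, I would design a generalized Fortescue--Lo protocol in which at each stage only the party holding the currently largest coordinate performs a weak two-outcome measurement of the form in Eq.~(14). Because the transformation rules of Eqs.~(9)--(10) preserve the ratio $x_{n_2}/x_{n_3}$ when $n_1$ measures, the ``good'' branch returns (up to vanishing corrections) to a rescaled W-class state with the ordering preserved and the protocol iterates, while the ``bad'' branch either produces a bipartite $(n_2,n_3)$ state directly or induces an ordering swap that ultimately routes a later branch into a bipartite state across another pair. Tracking the averaged concurrence across all branches and sending the measurement strength $\epsilon\to 0$ with the number of iterations going to infinity, the three summands of $\zeta$ should appear as asymptotic contributions from the three possible bipartite targets. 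The strict-decrease clause then follows from the same Taylor expansion used in (i): whenever $k=n_2$ or $k=n_3$, the second-order term of $\zeta$ carries a strictly negative quadratic form in the Kraus parameters that cannot be cancelled when the measurement is non-trivial, while the $k=n_1$ calculation admits equality.

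The hardest step will be the explicit construction of the achieving protocol and the rigorous tracking of its limit, because the coefficients $2$, $2/3$, $1/3$ in $\zeta$ hint that the protocol must delicately rebalance after each ordering swap so that the three pair-outcome probabilities conspire with the corresponding bipartite concurrences to reproduce $\zeta(\mbf{x})$ exactly; naive symmetric Fortescue--Lo measurements suffice only for the fully symmetric $\ket{W}$. A secondary obstacle is the piecewise nature of $\zeta$: the monotonicity computation must be carried out in several coordinate charts corresponding to the sorted regions, and the inequality along the boundaries $x_{n_i}=x_{n_{i+1}}$ will require a separate continuity argument to glue the pieces together.
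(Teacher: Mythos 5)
Your upper-bound logic (monotonicity of $\zeta$ combined with the observation that $\zeta$ collapses to the bipartite concurrence $2\sqrt{x_{n_1}x_{n_2}}$ when $x_{n_3}=0$) and your plan for proving monotonicity by weak-measurement decomposition with a case analysis over the sorted position of the measuring party are sound and essentially match the paper's Appendix B. The genuine gap is in the achievability protocol. You propose that at each stage only the party holding the \emph{largest} coordinate measures. But under the coordinate update rules, a measurement by party $k$ can only drive $x_k$ to zero; it cannot decouple a non-measuring party, since taking $a_i=0$ leaves a product state. Hence a bipartite entangled state between $n_1$ and $n_2$ --- which carries the dominant contribution $2\sqrt{x_{n_1}x_{n_2}}$ to $\zeta$ --- can only be produced by a measurement of $n_3$. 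In your scheme $n_3$ measures only after its coordinate has become maximal, which (the measuring party's coordinate being non-increasing on average and the others' fixed on average) requires $x_{n_1}$ and $x_{n_2}$ to first be driven below $x_{n_3}$ along those branches; the $(n_1,n_2)$ concurrence harvested there is then of order $2x_{n_3}$, far below $2\sqrt{x_{n_1}x_{n_2}}$, so your protocol cannot reach $\zeta(\mbf{x})$. The paper does the opposite: in Step 1 party $n_3$ performs a hard ``equalize-or-vanish'' measurement with $M_0=\sqrt{x_C/x_B}\,\op{0}{0}+\op{1}{1}$ and $M_1=\sqrt{1-x_C/x_B}\,\op{0}{0}$, which with probability $1-x_C/x_B$ decouples $n_3$ and harvests weighted concurrence $2(1-x_C/x_B)\sqrt{x_Ax_B}$, and otherwise leaves a state with $x_{n_2}=x_{n_3}$; Step 2 runs a restricted Fortescue--Lo protocol on $n_2$ and $n_3$ (not on $n_1$) until the state reaches $\ket{W}$; Step 3 runs the standard Fortescue--Lo protocol. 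The three terms of $\zeta$ arise exactly as the limiting contributions of these three steps.

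Relatedly, your strictness assignment is inverted relative to what the computation actually yields: Appendix B shows that non-trivial measurements by $n_1$ and $n_2$ strictly decrease $\zeta$ on average (quadratically in the measurement strength, which is why the loss vanishes in the weak limit and why finite-round protocols are suboptimal), whereas a diagonal non-trivial measurement by $n_3$ can preserve $\zeta$ exactly --- Step 1 of the achieving protocol is precisely such a measurement. (The main-text wording ``$n_2$ or $n_3$'' appears to be a typo for ``$n_1$ or $n_2$''.) Your belief that $n_1$ can measure for free is what motivates your protocol design, so both errors stem from the same misidentification and both would need to be corrected.
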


\begin{remark}
The equality in Eq. \eqref{Eq:C-random-lemma} holds for all W-class states, even those for which $x_0\not=0$.  This is in contrast to Eq. \eqref{Eq:E2-random} which holds only under the assumption $x_0=0$.  The value $E^{(\rnd)}_2(\mbf{x})$ for general W-class states remains an open problem.
\end{remark}

\begin{proof}
The monotonicity of $\zeta(\mbf{x})$ and it strictly decreasing under measurement of party $n_2$ and $n_3$ is proven in Appendix \ref{Appendix-concurrence-monotone}.  Here we prove the achievability.  Consider an arbitrary W-class state $\ket{\mbf{x}}=\sqrt{x_0}\ket{000}+\sqrt{x_A}\ket{100}+\sqrt{x_B}\ket{010}+\sqrt{x_C}\ket{001}$.  Since we are trying to optimize the total average bipartite concurrence for any two parties, without loss of generality we can relabel the parties and assume that $x_A\geq x_B\geq x_C$. As depicted in Fig. \ref{Fig:total-concurrence}, the following three-step protocol achieves a total average bipartite concurrence of $\overline{C}=\zeta(\mbf{x})-\delta$ for arbitrary $\delta>0$.\\
\begin{figure}[t]
    \centering
    \includegraphics[width=8cm]{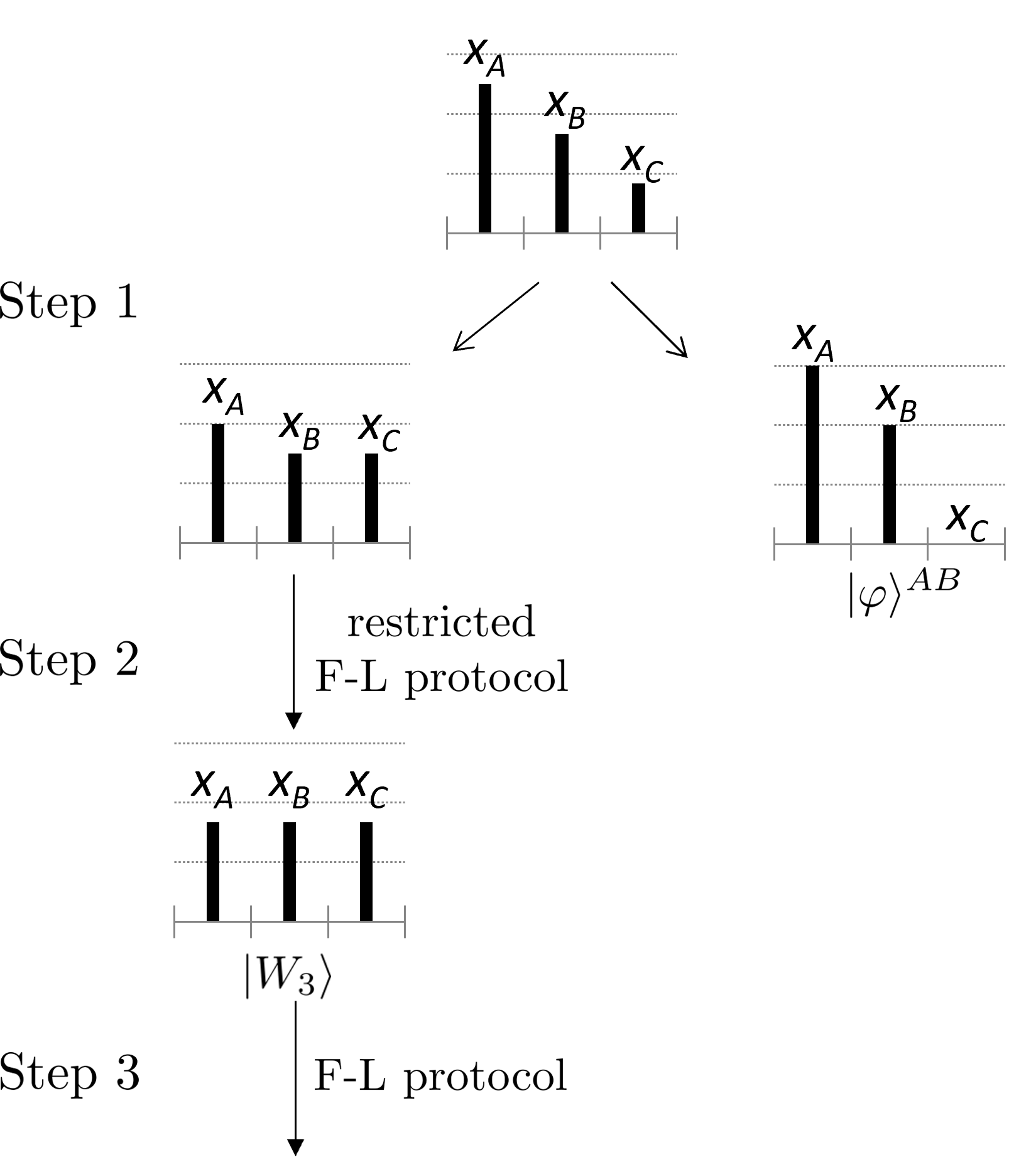}
    \caption{A depiction of the protocol described in Lemma \ref{Lem:total-concurrence}.}
    \label{Fig:total-concurrence}
\end{figure}

\noindent\textbf{Step 1.} Let Charlie perform a measurement described by the following Kraus Operators,\begin{align}
\label{Eq:Charlie-eq-vanish}
    &M_0=\sqrt{\frac{x_{C}}{x_{B}}}\op{0}{0}+\op{1}{1}, &M_1=\sqrt{1-\frac{x_{C}}{x_{B}}}\op{0}{0}.
\end{align}
There will be two measurement outcomes. Using the transformation rule of Eq. \eqref{Eq:coordinate-transformation-not-measuring}, the post-measurement state of outcome 1 is the bipartite pure state $\ket{\varphi}^{AB}=\frac{1}{\sqrt{p_1}}\sqrt{1-\tfrac{x_C}{x_B}}(\sqrt{x_0}\ket{00}+\sqrt{x_A}\ket{10}+\sqrt{x_B}\ket{01})$.  Hence from Eq. \eqref{Eq:bipartite-concurrence}, the concurrence of this state weighted by its probability is $C_1:=p_1C(\varphi)=2(1-\tfrac{x_C}{x_B})\sqrt{x_Ax_B}$.  The post-measurement state of outcome 0 is still a tripartite W-class state, which, when weighted by the probability of outcome $0$ has components $(\frac{x_Ax_C}{x_B}, x_C, x_C)$.

\noindent\textbf{Step 2.} Continuing with the post-measurement state of outcome 0, Bob and Charlie then perform $n$ iterations of a \textit{restricted} Fortescue-Lo protocol, which involves just Bob and Charlie performing $n$ iterations of the the weak measurement specified by Kraus operators in Eq. \eqref{F-L-Kraus}.  Here, $n$ is a parameter that we are free to choose, and for a given choice, we take $\epsilon>0$ such that $(1-\epsilon)^n=x_B/x_A$. So when $n\rightarrow \infty$ we have $\epsilon\rightarrow 0$.  

For each iteration of Bob and Charlie measuring $\{M_0^\epsilon,M_1^\epsilon\}$, they halt if at least one of them obtains outcome $1$.  If the both obtain outcome $0$ then they proceed with another iteration of measurement.  This is done for $n$ total iterations.  One can verify that if the protocol has not halted by the $j^{th}$ iterations, then outcome $00$ in the $j^{th}$ iteration yields the unnormalized state with coordinates $((1-\epsilon)^{2j} \frac{x_Ax_C}{x_B}, (1-\epsilon)^jx_C, (1-\epsilon)^j x_C)$.  On the other hand, if either outcome $01$ or $10$ is obtained, then the concurrence of the post-measurement state $\ket{\phi_{2,j}}$, weighted by its probability, is 
\begin{equation}
\label{Eq:Step-2-concurrence}
    C^j_2:=
    (1-\epsilon)^{(j-1)3/2}\cdot
    4\epsilon\sqrt{1-\epsilon}\cdot x_C\sqrt{\frac{x_A}{x_B}}.
\end{equation}
Should the protocol not halt for all $n$ iterations, we find that all all three coefficients end up being equal, $(1-\epsilon)^{2n} \frac{x_Ax_C}{x_B} = (1-\epsilon)^n x_C = \frac{x_Bx_C}{x_A}$, since $(1-\epsilon)^n=x_B/x_A$. After normalization, this state is the W-state $\ket{W}$.\\

\noindent\textbf{Step 3.}  The original F-L protocol is performed for $n'$ iterations on $\ket{W}$ with all three parties using measurement $\{M_0^{\epsilon'},M_1^{\epsilon'}\}$ and $\epsilon'$ a freely chosen parameter.  If a bipartite entangled state is obtained in the $j^{th}$ iteration of the F-L protocol, then the probability-weighted concurrence is
\begin{align}
     C_3^j=(1-\epsilon^\prime)^{2(j-1)}\cdot 6\epsilon^\prime(1-\epsilon^\prime) \frac{x_Bx_C}{x_A}.
\end{align}
As we let $n,n\to\infty$ and $\epsilon'\to 0$, the expected concurrence is given by
\begin{align}
   \overline{C}:=& C_1+\lim_{n\to\infty}\sum_{j=1}^nC_2^j+\lim_{\epsilon'\to 0}\lim_{n'\to\infty}\sum_{j=1}^{n'}C_3^j\notag\\
   =&2\sqrt{x_Ax_B}+\frac{2}{3}x_C\sqrt{\frac{x_A}{x_B}}+\frac{1}{3}\frac{x_Bx_C}{x_A}\nonumber \\
    =&\mathcal \zeta(\mbf{x}).
\end{align}

\end{proof}

Notice that this protocol uses two unbounded-round subroutines: the restricted F-L protocol (step 2) and the total F-L protocol (step 3).  In Section \ref{Sect:lower-bounds-F-L-Protocols} we will explore finite-round approximations and how the expected concurrence changes as we trade round numbers between these two subroutines.

\section{$\bigstar$-Random-Party Distillation}
\label{Sect:Star-Random-Party-Distillation}

The problem of $\star$-random-party EPR distillation in W-class states has been partially solved in Ref. \cite{Chitambar-2014b}.  Namely, it was shown that
\begin{align}
    E_2^{(\star\text{-}\rnd)}(\mbf{x})=\begin{cases} 2x_{\star}\quad\text{if} \quad x_{\star}\not=\max\{x_A,x_B,x_C\}\\
    2\eta(\mbf{x})\quad\text{if}\quad x_{\star}=\max\{x_A,x_B,x_C\}
    \end{cases}
    \label{Eq:E2-star-random}
\end{align}
for any W-state $\ket{\mbf{x}}$ such that $x_0=0$.  Similar to the scenario of total random-party distillation, the value of $ E_2^{(\star\text{-}\rnd)}(\mbf{x})$ is unknown when $x_0\not=0$.  However, the RHS of Eq. \eqref{Eq:E2-star-random} still serves as an upper bound due to the monotonicity of $x_\star$ and $\eta(\mbf{x})$.  Furthermore, the protocol achieving these distillation probabilities requires no more than three rounds of LOCC.

Our contribution here is solving the $\star$-Random-Party distillation problem for concurrence.
\begin{lemma}
\label{Lem:restricted-concurrence}
For arbitrary $\star\in\{A,B,C\}$, let  $\zeta^\star(\mbf{x})=2\sqrt{x_{\star}x_{n_1}}+\frac{2}{3}x_{n_2}\sqrt{\frac{x_{\star}}{x_{n_1}}}$ with $n_1,n_2\in\{A,B,C\}\setminus\{\star\}$ satisfying $x_{n_1}\geq x_{n_2}>0$.  Then
\begin{equation}
    C^{(\star\text{-}\rnd)}(\mbf{x})=\zeta^\star(\mbf{x}).
\end{equation}
\end{lemma}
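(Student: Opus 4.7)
The plan is to prove matching upper and lower bounds on $C^{(\star\text{-}\rnd)}(\mathbf{x})$. The upper bound $C^{(\star\text{-}\rnd)}(\mathbf{x}) \le \zeta^\star(\mathbf{x})$ is almost immediate: since $\zeta^\star$ is a known LOCC monotone on the W-class \cite{Chitambar-2012a, Chitambar-2012b, Chitambar-2014b}, and since on any bipartite $\star$-involved state $\ket{\varphi}$ (where $x_{n_2}=0$) the formula collapses to $\zeta^\star(\varphi) = 2\sqrt{x_\star x_{n_1}} = C(\varphi)$, any admissible $\star$-random distillation protocol $\ket{\mathbf{x}} \to \{p_i, \ket{\varphi_i}\}$ (in which every $\ket{\varphi_i}$ is bipartite between $\star$ and one of the other parties) satisfies $\sum_i p_i C(\varphi_i) = \sum_i p_i \zeta^\star(\varphi_i) \le \zeta^\star(\mathbf{x})$; taking the supremum over such protocols gives the bound.

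For the achievability direction I would design an explicit two-step LOCC protocol structurally parallel to Lemma~\ref{Lem:total-concurrence} but engineered so that every success branch keeps $\star$ entangled. Assume $x_{n_2} > 0$, else the state is already a bipartite $\star$-$n_1$ state and the claim is trivial. In Step~1, party $n_2$ applies the Kraus operators $M_0 = \sqrt{x_{n_2}/x_{n_1}}\op{0}{0} + \op{1}{1}$ and $M_1 = \sqrt{1-x_{n_2}/x_{n_1}}\op{0}{0}$, exactly as in Step~1 of Lemma~\ref{Lem:total-concurrence}. Outcome $M_1$ yields a bipartite $\star$-$n_1$ state with probability-weighted concurrence $2(1-x_{n_2}/x_{n_1})\sqrt{x_\star x_{n_1}}$, while outcome $M_0$ produces a tripartite W-class state with the two non-$\star$ coordinates equalized, namely $(x_\star x_{n_2}/x_{n_1},\, x_{n_2},\, x_{n_2})$ in unnormalized form.

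In Step~2, starting from the outcome-$M_0$ state, only parties $n_1$ and $n_2$ iteratively perform the Fortescue-Lo weak measurement $\{M_0^\epsilon, M_1^\epsilon\}$ of Eq.~\eqref{F-L-Kraus}, halting as soon as either of them obtains outcome $1$; party $\star$ does nothing. The crucial structural point is that any such halting outcome leaves $\star$ entangled with whichever of $n_1, n_2$ registered outcome $M_0$, so every success branch is automatically $\star$-compatible --- there is no analogue of Step~3 of Lemma~\ref{Lem:total-concurrence} that would produce a non-$\star$ bipartite state. A geometric-series calculation (entirely analogous to the one computing $\sum_j C_2^j$ in the proof of Lemma~\ref{Lem:total-concurrence}) shows that in the joint limit $\epsilon \to 0$, $n \to \infty$ with $(1-\epsilon)^n \to 0$, so that the non-halting residue has vanishing probability, the total probability-weighted concurrence from Step~2 converges to $\tfrac{4}{3}\bigl[\sqrt{x_\star' x_{n_1}'} + \sqrt{x_\star' x_{n_2}'}\bigr]$ where the primed coordinates are those of the post-Step-1 state. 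Substituting $x_{n_1}' = x_{n_2}' = x_{n_2}$ and $x_\star' = x_\star x_{n_2}/x_{n_1}$ gives $\tfrac{8}{3}\,x_{n_2}\sqrt{x_\star/x_{n_1}}$, and adding the Step-1 contribution collapses to $2\sqrt{x_\star x_{n_1}} + \tfrac{2}{3} x_{n_2}\sqrt{x_\star/x_{n_1}} = \zeta^\star(\mathbf{x})$, achievable to within any $\delta > 0$ for sufficiently small $\epsilon$.

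The obstacle I expect to merit the most care is verifying that this single construction works uniformly for all three orderings of $\mathbf{x}$ --- in particular when $\star$ is \emph{not} the largest component, so that $x_\star' < x_{n_1}' = x_{n_2}'$ after Step~1. Unlike Lemma~\ref{Lem:total-concurrence} we are not free to relabel parties to put $\star$ in a convenient role, so the formula $\tfrac{4}{3}\bigl[\sqrt{x_\star' x_{n_1}'} + \sqrt{x_\star' x_{n_2}'}\bigr]$ must be justified irrespective of the relative size of $x_\star'$. It will hold because the geometric ratio $(1-\epsilon)^{3/2}$ controlling convergence of the Step-2 sum arises only from the two parties that actually measure, and $\star$'s coordinate enters only through the $\sqrt{\cdot}$ factors of each halting branch. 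With that invariance verified, the concluding algebra reproducing $\zeta^\star(\mathbf{x})$ is routine, and combining with the upper bound completes the proof.
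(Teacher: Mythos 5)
Your proposal is correct and follows essentially the same route as the paper: the upper bound from the known monotonicity of $\zeta^\star$ (which reduces to the concurrence on $\star$-involved bipartite states), and achievability via the same two-step protocol — party $n_2$ equalizes the non-$\star$ coordinates, then $n_1$ and $n_2$ run the restricted Fortescue--Lo iteration indefinitely while $\star$ never measures. Your limit $\tfrac{4}{3}\bigl[\sqrt{x_\star' x_{n_1}'}+\sqrt{x_\star' x_{n_2}'}\bigr]=\tfrac{8}{3}x_{n_2}\sqrt{x_\star/x_{n_1}}$ agrees with the paper's $\lim_{n\to\infty}\sum_j C_2^j$, and your handling of the ordering of $x_\star$ relative to $x_{n_1},x_{n_2}$ matches the paper's observation that only $x_{n_1}\geq x_{n_2}$ need be assumed.
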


\begin{proof}
In Ref. \cite{Chitambar-2014b} it was shown that $\zeta^\star(\mbf{x})$ is an entanglement monotone that that strictly decreases on average whenever party $\star$ or $n_1$ performs a non-trivial measurement.  What remains to be demonstrated is that $\zeta^\star(\mbf{x})$ is indeed an achievable concurrence distillation average for pairs $(\star,x_{n_1})$ and $(\star,x_{n_2})$.

Without loss of generality we assume $x_A>0$ since otherwise the lemma is trivially true.  The distillation protocol is very similar to the one given in Lemma \ref{Lem:total-concurrence}.  Formally, we replace $x_\star\leftrightarrow x_A$, $x_{n_1}\leftrightarrow x_B$, and $x_{n_2}\leftrightarrow x_C$.  However, while it was assumed that $x_A\geq x_B\geq x_C$ in Lemma \ref{Lem:total-concurrence}, we only assume that $x_B\geq x_C$ in this protocol.  

Step 1 of this protocol is then the same Step 1 as Lemma \ref{Lem:total-concurrence}, with Charlie performing the measurement given in Eq. \eqref{Eq:Charlie-eq-vanish}.  The weighted concurrence for outcome 1 is $C_1:=p_1C(\varphi)=2(1-\frac{x_C}{x_B})\sqrt{x_A x_B}$.  In step 2, Bob and Charlie again perform the restricted F-L protocol; however now there is no halting round.  The weighted concurrence of a bipartite state $\ket{\phi_{2,j}}$ obtained in round $j$ is $C_2^j$, as given in Eq. \eqref{Eq:Step-2-concurrence}.  Note that Alice then never performs a measurement in this protocol.  After replacing our system labels by $\{\star,n_1,n_2\}$, the total expected concurrence as the restricted F-L protocol in step 2 continues indefinitely is 
\begin{align}
    \overline{C}&=C_1+\lim_{n\to\infty}\sum_{j=1}^nC_2^n=2\sqrt{x_\star x_{n_1}}+\frac{2}{3}x_{n_2}\sqrt{\frac{x_\star }{x_{n_1}}}\notag\\
    &=\zeta^\star(\mbf{x}).
\end{align}
\end{proof}

\section{Round Complexity in Random-Party Distillation}

We now turn to the question of round complexity, which is the motivating topic of this work. 

\subsection{Lower Bounds in EPR Distillation}

\label{Sect:lower-bound-EPR}

Suppose the parties are only afforded a finite number of LOCC rounds.  What is the largest achievable probability of obtaining an EPR state starting from $\ket{W}$?  Here we prove a lower bound on this probability as a function of round number.

Consider any finite-round random EPR distillation protocol $\mc{P}$.  We can assume without loss of generality that every branch in the protocol ends with either an EPR state $\ket{\Psi^+}$ or a product state; this is because every weakly entangled state $\ket{\phi}$ can be transformed into $\ket{\Psi^+}$ with some probability without extending the number of rounds.  Any branch that terminates with an EPR state will be called a \textit{success branch} (many different success branches will overlap).

Our first step will be to argue that we can always transform the protocol $\mc{P}$ into one in which Alice and Bob just perform diagonal measurements.  As described in the introduction, we can characterize each local measurement by upper-triangular Kraus operators, $M_i=\left(\begin{smallmatrix}\sqrt{a_i}&b_i\\0&\sqrt{c_i}\end{smallmatrix}\right)$.  Notice that the composition of two such matrices has form
\begin{equation}
    \label{Eq:Kraus-composition}
M_iM_i'=\left(\begin{smallmatrix}\sqrt{a_ia_i'}&b_i\sqrt{c_i'}+b_i'\sqrt{a_i}\\0&\sqrt{c_ic_i'}\end{smallmatrix}\right).
\end{equation}
If we then consider the full set of Kraus operators constituting protocol $\mc{P}$, each success branch $\lambda$ will be characterized by a product Kraus operator of the form
\[T_{\lambda}=\left(\begin{smallmatrix}\sqrt{a_{1,\lambda}}&b_{1,\lambda}\\0&\sqrt{c_{1,\lambda}}\end{smallmatrix}\right)\otimes \left(\begin{smallmatrix}\sqrt{a_{2,\lambda}}&b_{2,\lambda}\\0&\sqrt{c_{2,\lambda}}\end{smallmatrix}\right)\otimes \left(\begin{smallmatrix}\sqrt{a_{3,\lambda}}&b_{3,\lambda}\\0&\sqrt{c_{3,\lambda}}\end{smallmatrix}\right),\]
where each matrix in the tensor product is obtained by concatenating all the local operators along branch $\lambda$ performed by Alice, Bob, and Charlie, respectively.  Since the only maximally entangled two-qubit state of the form $\ket{\mbf{x}}$ is $\ket{\Psi^+}$, we must have $T_\lambda\ket{W}\propto\ket{\Psi^+}\ket{0}$, with $\ket{\Psi^+}$ being held by some pair of parties.  The branch probability is then given by $|\bra{\Psi^+}\bra{0}T_\lambda\ket{W}|^2$, which is independent of the off-diagonal terms $b_{k,\lambda}$ in $T_\lambda$.  Furthermore,  by repeatedly applying the composition rule of Eq. \eqref{Eq:Kraus-composition}, the diagonal terms in $T_\lambda$ only depend on the diagonal terms of the individual local measurements in each round.  Hence, we would obtain the same success branch probabilites $|\bra{\Psi^+}\bra{0}T_\lambda\ket{W}|^2$ if Alice and Bob only performed the diagonal parts of their local Kraus operators.  In general, the diagonal parts of the Kraus operators will not form a complete measurement themselves, but they can easily be completed without affecting the overall success of the protocol.  Specifically, if the $M_i=\left(\begin{smallmatrix}\sqrt{a_i}&b_i\\0&\sqrt{c_i}\end{smallmatrix}\right)$ are the Kraus operators for some local measurement, then it is replaced by the Kraus operators $M'_i=\left(\begin{smallmatrix}\sqrt{a_i}&0\\0&\sqrt{c_i}\end{smallmatrix}\right)$ with an additional outcome $M'_0=\left(\begin{smallmatrix}0&0\\0&\sqrt{\sum_{i}|b_i|^2}\end{smallmatrix}\right)$.  Hence we have established the following simplification.
\begin{proposition}
\label{Prop:diagonal}
The optimal success probability for random-party EPR distillation of $\ket{W}$ in $N$ rounds can always be obtained by an $N$-round protocol in which all parties perform local measurements with diagonal Kraus operators.
\end{proposition}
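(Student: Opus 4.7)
The plan is to formalize the sketch already laid out in the paragraphs immediately preceding the proposition. Start with an arbitrary $N$-round LOCC protocol $\mc{P}$ and, using the standard reduction (absorbing each $U_i^\dagger$ into the next round's measurement), assume every local Kraus operator has the upper-triangular form $M_i=\left(\begin{smallmatrix}\sqrt{a_i}&b_i\\ 0&\sqrt{c_i}\end{smallmatrix}\right)$. Iterating the composition rule \eqref{Eq:Kraus-composition} then gives, for each branch $\lambda$, a product Kraus operator $T_\lambda=A_\lambda\otimes B_\lambda\otimes C_\lambda$ with each factor upper triangular, whose diagonal entries depend \emph{only} on the diagonal entries of the per-round Kraus operators encountered along $\lambda$.

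Next, I would explicitly expand $T_\lambda\ket{W}$ in the computational basis. Using $M_i\ket{0}=\sqrt{a_i}\ket{0}$ and $M_i\ket{1}=b_i\ket{0}+\sqrt{c_i}\ket{1}$, one finds that the coefficients of $\ket{100},\ket{010},\ket{001}$ involve only the diagonal data $\{a_k,c_k\}$, while the $b_k$'s contribute only to the $\ket{000}$ coefficient. Consequently, the inner product $\bra{\Psi^+}_{ij}\bra{0}_k T_\lambda\ket{W}$ picks out only diagonal quantities. Since a success branch is by definition one whose post-measurement state equals $\ket{\Psi^+}\ket{0}$ on some pair, the branch probability coincides with $|\bra{\Psi^+}\bra{0}T_\lambda\ket{W}|^2$, and hence depends only on the diagonal entries of the local Kraus operators.

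I would then construct the diagonalized protocol $\mc{P}'$ by replacing each local measurement $\{M_i\}$ with the diagonal Kraus operators $M_i'=\mathrm{diag}(\sqrt{a_i},\sqrt{c_i})$ together with one extra outcome $M_0'=\left(\begin{smallmatrix}0&0\\ 0&\sqrt{\sum_i|b_i|^2}\end{smallmatrix}\right)$ absorbing the off-diagonal mass. Completeness $\sum_i M_i^\dagger M_i=\mathbb{I}$ forces $\sum_i a_i=1$ and $\sum_i c_i+\sum_i|b_i|^2=1$, confirming $\{M_i'\}\cup\{M_0'\}$ is a valid POVM; the $M_0'$ outcome is declared a failure. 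The round count is preserved by construction. Along any branch that avoids $M_0'$ at every round, the accumulated product operator $T_\lambda'$ has the same diagonal entries as $T_\lambda$ and zero off-diagonals. The main point requiring care, and where I would spend the most attention, is verifying that every success branch of $\mc{P}$ yields a success branch of $\mc{P}'$ of matching weight: the constraints making $T_\lambda\ket{W}\propto\ket{\Psi^+}\ket{0}$ involve algebraic relations among the $a_k,c_k$ (e.g.\ $c_C=0$ and $a_Bc_A=a_Ac_B$ when Alice and Bob are the target pair) together with a cancellation condition on the $b_k$'s that kills the $\ket{000}$ term. In $\mc{P}'$ those same diagonal relations already force $T_\lambda'\ket{W}\propto\ket{\Psi^+}\ket{0}$ (the $\ket{000}$ component vanishing trivially) with branch probability $|\bra{\Psi^+}\bra{0}T_\lambda\ket{W}|^2$. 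Therefore $\mc{P}'$ achieves at least the total success probability of $\mc{P}$, completing the argument.
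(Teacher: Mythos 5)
Your proposal is correct and follows essentially the same route as the paper: reduce to upper-triangular Kraus operators, use the composition rule to see that the diagonal of each accumulated branch operator $T_\lambda$ depends only on per-round diagonals, observe that the success-branch amplitude $\bra{\Psi^+}\bra{0}T_\lambda\ket{W}$ is insensitive to the $b_k$'s, and complete the diagonal parts into a valid measurement with an extra failure outcome. Your added explicit check that the diagonal relations alone already force $T'_\lambda\ket{W}\propto\ket{\Psi^+}\ket{0}$ in the modified protocol is a welcome bit of rigor the paper leaves implicit, but it is not a different argument.
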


With this proposition, let us consider an arbitrary distillation protocol $\mc{P}$ with all measurements in diagonal form.  As described in Section \ref{Sect:W-class-structure}, using weak measurement theory, we can depict the protocol as a tree that starts with the state $\ket{W}$ and evolves continuously along different branches.  Crucially, the conversion of a general protocol to one in which the coordinates $(x_A,x_B,x_C)$ transform smoothly can be done without changing the number of LOCC rounds.  This is because each local measurement is partitioned into a sequence of weak measurements with no communication needed between each measurement in the sequence.

To proceed with the analysis, we will need to introduce some concepts that characterize a given protocol $\mc{P}$.  We begin by defining an important class of states that appear in $\mc{P}$.
\begin{definition}
A state $\ket{\mbf{x}}$ with ordered components $x_{n_1}\geq x_{n_2}\geq x_{n_3}$ obtained in $\mc{P}$ is called a \textit{block state} if  (i) $x_{n_1}=x_{n_2}$ and $x_{n_3}\not=0$, (ii) a party whose component value is maximal is the next to perform a local measurement on $\ket{\mbf{x}}$, and (iii) an EPR state is obtained with some nonzero probability from $\ket{\mbf{x}}$.  
\end{definition}
\noindent Using the concept of a block state, we can analyze an arbitrary random-party distillation protocol in a systematic way.  First, with continuous evolution of the states in the protocol, it follows that the party having the largest component value cannot change along a success branch without first passing through a block state.  We can then partition the protocol in blocks determined by when there is a change in the party having the largest component value.  More precisely, define a sub-block of $\mc{P}$ as any sub-tree in the overall protocol tree that begins with a block state $\ket{\mbf{x}}$ and terminates as soon as it reaches some latter block state $\ket{\mbf{y}}$ (which may be equivalent to $\ket{\mbf{x}}$), an EPR state, or a product state (see Fig. \ref{Fig:sub-block-new}).  Each terminal block state in one sub-block is then the initial block state of a subsequent sub-block (unless the protocol halts), and the entire LOCC protocol can be divided into a disjoint union of sub-blocks.  We say that an $N$-block protocol is one that traverses $N$ different sub-blocks along the longest branch of the protocol.  We further organize the protocol into $N$ layers such that all sub-blocks belonging to layer $k$ have an initial block state laying on a branch that has previously traveled through $k-1$ sub-blocks.  Finally, we introduce the notion of a canonical sub-block.
\begin{definition}
A sub-block is called \textit{canonical} if all its terminal block states are $\ket{W}$ (see Fig. \ref{Fig:canonical-block} (a)).  Hence a canonical sub-block always ends with $\ket{W}$, an EPR state, or a product state.  A canonical sub-block is called \textit{$W$-canonical} if it begins with $\ket{W}$ and returns to $\ket{W}$ along only one branch within the sub-block (see Fig. \ref{Fig:canonical-block} (b)).
\end{definition}
\noindent Note that all sub-blocks in the $N^{th}$ layer are canonical since they obtain only EPR and product states.

\begin{figure}[t]
\centering
\includegraphics[scale=0.6]{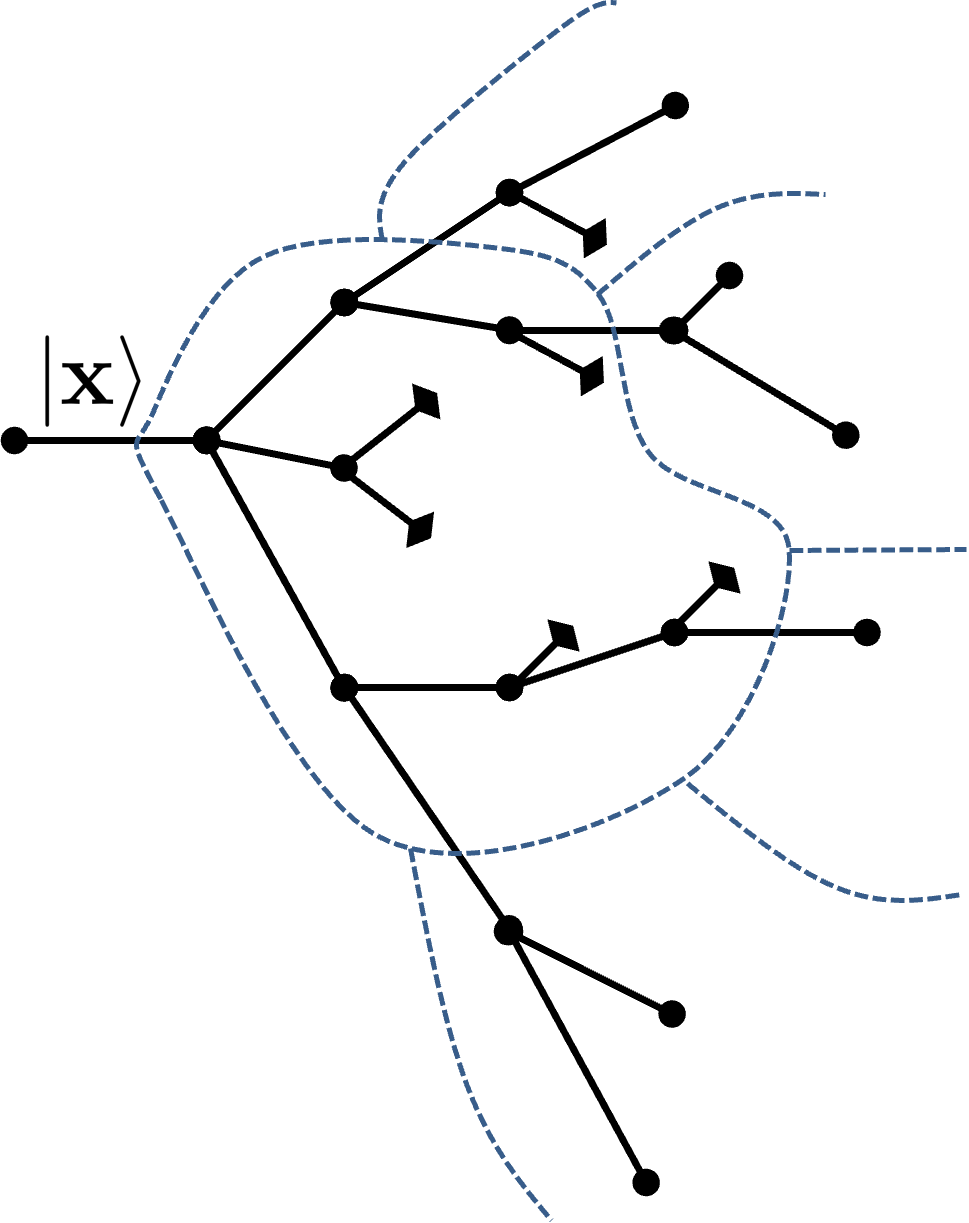}
\caption{ Any random-distillation protocol $\mc{P}$ can partitioned into disjoint sub-blocks (enclosed by dashed line).  Each sub-block starts with a block state and terminates with either another block state, an EPR state, or a product state.  EPR and product states are indicated by diamond end points.  Block states are obtained along any edge intersecting a dashed line.  A terminal block state of one sub-block is the initial block state of a subsequent sub-block unless the protocol  halts. }
\label{Fig:sub-block-new}
\end{figure}

Our first goal will be to reduce any $N$-block protocol $\mc{P}$ into another $\mc{P}'$ whose sub-blocks are $W$-canonical. The following technical lemma and its corollary provide the main ingredients for this procedure.
\begin{lemma}
\label{Lem:complexity}
Suppose that an arbitrary state $\ket{\mbf{x}}=(x_{n_1},x_{n_2},x_{n_3})$ is transformed into either $\ket{W}$, an EPR pair, or a product state with respective probabilities $P_W$, $P_{\text{EPR}}$, and $P_{F}$.  Further suppose that party $n_1(\mbf{x})$ maintains the largest component value in each of the final states (if more than one party has maximal value, fix $n_1(\mbf{x})$ to be one them).  Then \begin{align}
    P_{\text{EPR}}&\leq 2(x_{n_2}+x_{n_3})-4\frac{x_{n_2}x_{n_3}}{x_{n_1}}\label{Eq:Lemma-round-EPR-bound}\\
    P_W&=\frac{3}{2}(x_{n_2}+x_{n_3})-\frac{3}{4}P_{\text{EPR}}.\label{Eq:Lemma-round-W-equality}
\end{align}
Moreover, the upper bound on $P_{\text{EPR}}$ is achievable.
\end{lemma}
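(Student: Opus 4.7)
The plan is to prove the upper bound on $P_{\text{EPR}}$ and the accompanying formula for $P_W$ by combining two LOCC-monotone inequalities, and then to exhibit a recursive protocol that saturates both.

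For the bound, I would derive two linear constraints on the pair $(P_W,P_{\text{EPR}})$. The first uses the entanglement monotone $\kappa(\mbf{x})=2(x_{n_2}+x_{n_3})-x_{n_2}x_{n_3}/x_{n_1}$ introduced earlier: since $\kappa$ takes the value $1$ on both $\ket{W}$ and any EPR pair and $0$ on any product state, its LOCC monotonicity gives $P_W+P_{\text{EPR}}\leq\kappa(\mbf{x})$. The second uses the hypothesis that $n_1$ retains the largest coordinate in every final state, which forces any EPR outcome to involve $n_1$ (otherwise $n_1$ would be left with coordinate zero while another party holds $1/2$) and forces any product outcome to have $y_{n_2}^{\text{final}}=y_{n_3}^{\text{final}}=0$ (a W-class product state has at most one nonzero coordinate among $x_A,x_B,x_C$, and the hypothesis forces it to be $n_1$'s). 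Consequently $y_{n_2}^{\text{final}}+y_{n_3}^{\text{final}}$ equals $2/3$ on $\ket{W}$, $1/2$ on any EPR branch, and $0$ on any product branch. Combined with the submartingale property Eq.~\eqref{Eq:coordinate-transformation-measuring} applied to $x_{n_2}$ and $x_{n_3}$ separately, this yields $\tfrac{2}{3}P_W+\tfrac{1}{2}P_{\text{EPR}}\leq x_{n_2}+x_{n_3}$, which rearranges to the $P_W$ equation in the lemma. Viewing these two inequalities together as a linear program in $(P_W,P_{\text{EPR}})$, the maximum of $P_{\text{EPR}}$ sits at the vertex where both are tight, and a short algebraic calculation gives $P_{\text{EPR}}=2(x_{n_2}+x_{n_3})-4x_{n_2}x_{n_3}/x_{n_1}$ with matching $P_W=3 x_{n_2}x_{n_3}/x_{n_1}$.

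For achievability, I would construct the optimal protocol recursively. In the first step, party $n_3$ performs a two-outcome measurement with $M_1=\sqrt{1-x_{n_3}/x_{n_1}}\,\op{0}{0}$ and $M_0$ its diagonal complement. The $M_1$ branch projects $n_3$'s qubit onto $\ket{0}$, leaving a bipartite state between $n_1$ and $n_2$ that Lo's protocol distills to an EPR pair with optimal probability $2x_{n_2}/(x_{n_1}+x_{n_2})$. The $M_0$ branch produces a state in which the coordinates originally labeled $n_1$ and $n_3$ become tied, i.e.\ a block state with $n_1$ still holding the (now tied) maximum. One then recursively applies the same procedure to the block state, with the party now performing the measurement being the one whose coordinate is smallest; a direct calculation shows that from a block state with top coordinates $\lambda$ and bottom coordinate $\mu$ this step yields $\ket{W}$ with probability $3\mu$ and an EPR pair with probability $2(\lambda-\mu)$. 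Bookkeeping the aggregate probabilities through the two-step recursion gives $P_{\text{EPR}}=2(x_{n_2}+x_{n_3})-4x_{n_2}x_{n_3}/x_{n_1}$ and $P_W=3x_{n_2}x_{n_3}/x_{n_1}$, saturating both constraints simultaneously.

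The main obstacle is ensuring the $n_1$-max constraint is preserved throughout the recursion: a naive Fortescue--Lo-style iteration would drift the state toward a configuration in which a different party holds the maximum, yielding an illegal final outcome in the limit. The specific choice of measurement parameter in $M_1$ above is what forces the $M_0$ branch into a block state with $n_1$ still tied for maximal, so the recursion closes cleanly at $\ket{W}$, an EPR pair, or a product state with $n_1$ maximal.
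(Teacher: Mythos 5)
Your achievability argument is fine: the sequential protocol ($n_3$ ``equal-or-vanish,'' then $n_2$ ``equal-or-vanish'' on the resulting block state) reproduces exactly the probabilities $P_{\text{EPR}}=2(x_{n_2}+x_{n_3})-4x_{n_2}x_{n_3}/x_{n_1}$ and $P_W=3x_{n_2}x_{n_3}/x_{n_1}$ of the paper's simultaneous equal-or-vanish construction. The converse, however, has a genuine gap. You derive your second constraint, $\tfrac{2}{3}P_W+\tfrac{1}{2}P_{\text{EPR}}\leq x_{n_2}+x_{n_3}$, from the supermartingale property \eqref{Eq:coordinate-transformation-measuring}, so you only obtain an \emph{inequality}, whereas the lemma asserts an \emph{equality} for $P_W$. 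The equality is not a consequence of general LOCC monotonicity; it requires that each party's component be exactly preserved on average, which holds precisely because the protocol has already been reduced to diagonal Kraus operators via Proposition \ref{Prop:diagonal} (so that $\sum_i c_i=1$, not merely $\leq 1$). The paper's proof invokes exactly this: $x_{n_1}=\tfrac{1}{2}P_{\text{EPR}}+\tfrac{1}{3}P_W+P_F$ as an identity, from which $x_{n_2}+x_{n_3}=\tfrac{2}{3}P_W+\tfrac{1}{2}P_{\text{EPR}}$ follows with equality.

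This is not a cosmetic issue, because your linear-programming step then fails. Writing $S:=x_{n_2}+x_{n_3}$ and $D:=x_{n_2}x_{n_3}/x_{n_1}$, your two inequalities are $P_W+P_{\text{EPR}}\leq \kappa=2S-D$ and $\tfrac{2}{3}P_W+\tfrac{1}{2}P_{\text{EPR}}\leq S$. Maximizing $P_{\text{EPR}}$ over this region (with $P_W\geq 0$) is attained at $P_W=0$, $P_{\text{EPR}}=\kappa=2S-D$, since $\kappa\leq 2S$; the objective gradient $(0,1)$ does not lie in the cone of the two constraint normals $(1,1)$ and $(\tfrac{2}{3},\tfrac{1}{2})$, so the doubly-tight vertex you point to is \emph{not} the maximizer. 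Your argument therefore only yields $P_{\text{EPR}}\leq 2S-D$, strictly weaker than the claimed $2S-4D$. The fix is to use the martingale equality: substituting $P_W=\tfrac{3}{2}S-\tfrac{3}{4}P_{\text{EPR}}$ into the $\kappa$ (equivalently $\eta$) monotone inequality pins down $\tfrac{1}{4}P_{\text{EPR}}\leq \tfrac{1}{2}S-D$, i.e.\ the stated bound. So you should explicitly invoke the diagonal-Kraus reduction before this lemma; with that in hand your two-constraint bookkeeping becomes correct and is essentially the paper's argument.
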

\begin{proof}
Since we are considering only protocols with diagonal local Kraus operators, the average component value for each party remains unchanged.  Hence, $x_{n_1}=\frac{1}{2}P_{\text{EPR}}+\frac{1}{3}P_{W}+P_F$.  Since $P_F=1-P_{\text{EPR}}-P_W$ and $x_{n_1}+x_{n_2}+x_{n_3}=1$, we have $x_{n_2}+x_{n_3}=\frac{1}{2}P_{\text{EPR}}+\frac{2}{3}P_W$, which is Eq. \eqref{Eq:Lemma-round-W-equality}.  On the other hand, the $\eta$ monotone says that $\eta(\mbf{x})=x_{n_2}+x_{n_3}-\frac{x_{n_2}x_{n_3}}{x_{n_1}}\geq \frac{1}{2}P_{\text{EPR}}+\frac{1}{3}P_{W}$.  Combining this with Eq. \eqref{Eq:Lemma-round-W-equality} yields Eq. \eqref{Eq:Lemma-round-EPR-bound}.

The upper bound on $P_{\text{EPR}}$ is achievable using an ``equal or vanish'' protocol \cite{Chitambar-2012b}.  This involves party $n_k(\mbf{x})$, for $k=2,3$, performing a binary-measurement with Kraus operators $M_1=\text{Diag}[\sqrt{x_{n_k}/x_{n_1}},1]$ and $M_2=\text{Diag}[\sqrt{1-x_{n_k}/x_{n_1}},0]$.  If they both obtain outcome $1$ the resulting state is $\ket{W}$, and this occurs with probability $P_W=3 \frac{x_{n_2}x_{n_3}}{x_{n_1}}$.  On the other hand, if one and only one party obtains outcome $1$, then an EPR pair is obtained.  This occurs with total probability 
\begin{align}
    P_{\text{EPR}}&=2 x_{n_2}(1-\frac{x_{n_3}}{x_{n_1}})+2 x_{n_3}(1-\frac{x_{n_2}}{x_{n_1}})\notag\\
    &=2(x_{n_2}+x_{n_3})-4\frac{x_{n_2}x_{n_3}}{x_{n_1}}.\notag
\end{align}
\end{proof}

\begin{figure}[t]
\centering
\includegraphics[scale=0.5]{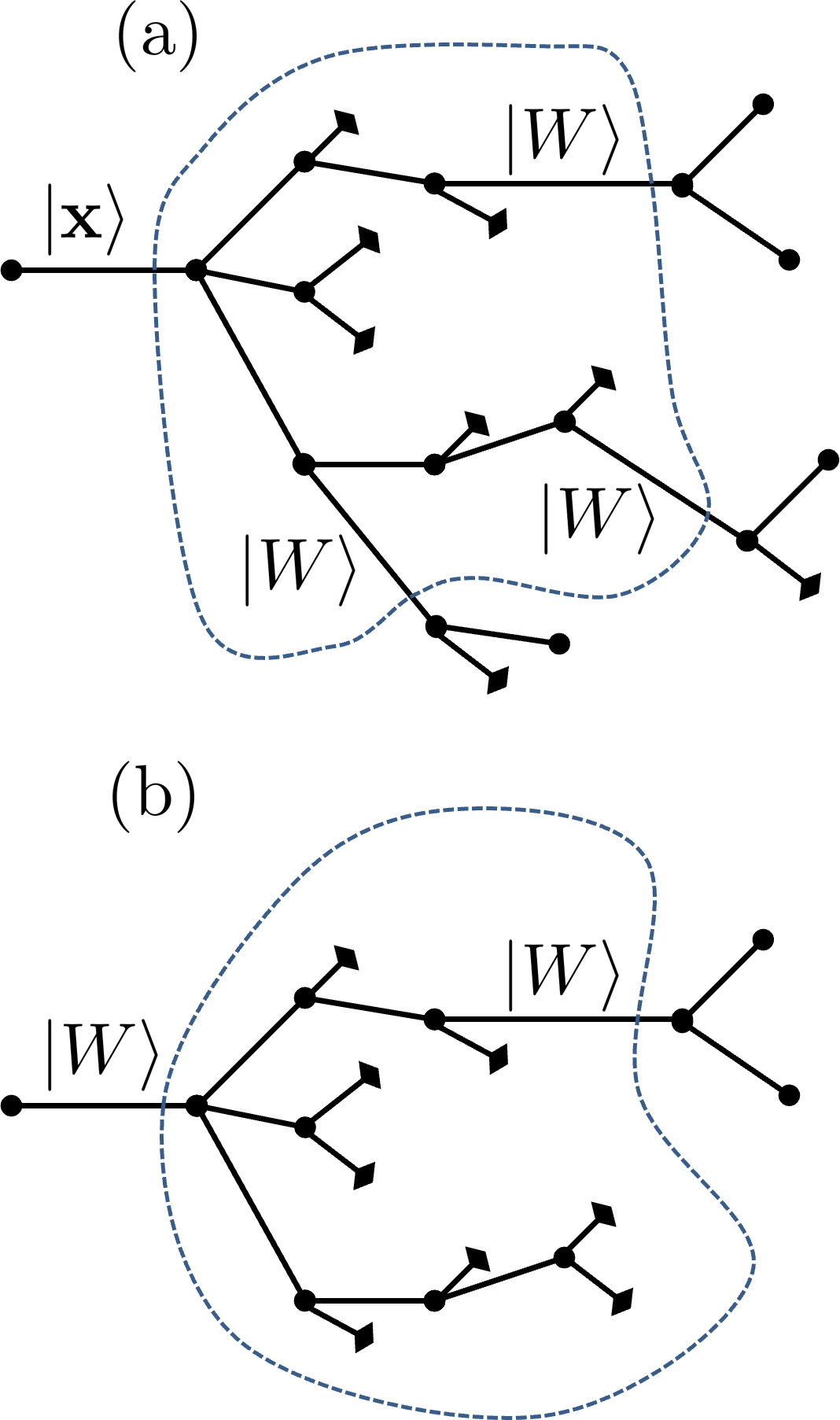}
\caption{(a) A canonical sub-block is one in which every terminating block state is $\ket{W}$.  (b) A $W$-canonical sub-block is a canonical sub-block whose initial block state is $\ket{W}$ and that has at most one terminating block state.}
\label{Fig:canonical-block}
\end{figure}

Lemma \ref{Lem:complexity} requires that party $n_1(\mbf{x})$ have the largest component value in all terminal states of the sub-block.  We next relax this condition and derive similar bounds on $P_{\text{EPR}}$ and $P_W$ when there is no party with maximal value in all the outcomes.  To conduct this analysis, let $a_i$ and $c_i$ be the numbers characterizing the measurement on an initial block state $\ket{\mbf{x}}$.  Let $\mathcal{A}$ (resp. $\mathcal{C}$) be the set of indices such that $a_i> c_i$ (resp. $c_i>a_i$).  Note that we can always assume that $a_i\not=c_i$ for every branch $i$.  Indeed, suppose that $a_i=c_i$, which means the state is left unchanged along branch $i$.  If all other branches lead to a higher expected EPR probability than the EPR probability along branch $i$, then one can just remove branch $i$ and scale the probability of the other branches up to unity.  On the other hand, if the subsequent sub-protocol along branch $i$ has a higher expected EPR probability than that along the other branches, one can just perform this sub-protocol on the original state $\ket{\mbf{x}}$.  Therefore, all branches with $a_i=c_i$ have been eliminated without decreasing the overall EPR probability or the number of rounds.  Since $\ket{\mbf{x}}$ is a block state, we have $x_{n_1}=x_{n_2}\geq x_{n_3}$.  The post-measurement states of $\ket{\mbf{x}}$ will thus split into two categories: for outcomes $i\in\mc{A}$ we will have $a_i x_{n_1}\geq a_i x_{n_3}>c_ix_{n_1}$ and for outcomes $i\in\mc{C}$ we will have $c_ix_{n_1}>a_ix_{n_1}\geq a_ix_{n_3}$.  In both these cases, the conditions of Lemma \ref{Lem:complexity} apply since the party having maximal component cannot switch without passing through another block state, and by assumption, the only block state obtained within the given sub-block is $\ket{W}$ (for which all parties have maximal component value).  We can then prove the following corollary.  
\begin{corollary}
\label{Cor:main}
Suppose that $\ket{\mbf{x}}$ is the initial block state of a canonical sub-block and a local measurement is performed with Kraus operator parameters $a_i$ and $c_i$.   Let $a=\sum_{i\in\mathcal{A}}a_i$, $c=\sum_{i\in\mathcal{A}}c_i$, and $\tilde{a}=a-c$. Then
\begin{align}
    P_{\text{EPR}}&\leq 2x_{n_1}(1-\tilde{a})+2x_{n_3}-4x_{n_3}(1-\tilde{a})^2\label{Eq:EPR-bound}\\
     P_W&=\frac{3}{2}x_{n_1}(1-\tilde{a})+\frac{3}{2}x_{n_3}-\frac{3}{4}P_{\text{EPR}}.
\end{align}
Furthermore, the upper bound of $P_{\text{EPR}}$ can be saturated using binary-outcome measurements that obtain $\ket{W}$ along just one branch. 
\end{corollary}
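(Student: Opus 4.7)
The plan is to decompose the sub-block into its first measurement and the post-measurement sub-protocols, apply Lemma \ref{Lem:complexity} conditional on each outcome, and sum. Since $\ket{\mbf{x}}$ is a block state with $x_{n_1}=x_{n_2}$, I can assume without loss of generality that party $n_1$ measures (the $n_2$ case is identical by symmetry). Outcome $i$ produces a W-class state whose unnormalized components on $(n_1,n_2,n_3)$ are $(c_ix_{n_1},\,a_ix_{n_1},\,a_ix_{n_3})$: for $i\in\mathcal{A}$ the strict maximum sits on $n_2$, and for $i\in\mathcal{C}$ on $n_1$. Because the surrounding sub-block is canonical, the only block state that can reappear before termination is $\ket{W}$, so the maximum-component party is fixed along each post-outcome sub-protocol---exactly the hypothesis of Lemma \ref{Lem:complexity}, which I invoke branch-by-branch.

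Multiplying the per-outcome inequalities by $p_i$ cancels the normalization and gives
\[
p_iP_{\text{EPR}}^{(i)}\leq\begin{cases}2(c_ix_{n_1}+a_ix_{n_3})-4c_ix_{n_3}&\text{if }i\in\mathcal{A},\\ 2a_i(x_{n_1}+x_{n_3})-4a_i^2x_{n_3}/c_i&\text{if }i\in\mathcal{C},\end{cases}
\]
together with the corresponding equalities for $p_iP_W^{(i)}$. Summing over $i$ with the diagonal-Kraus normalizations $\sum_ia_i=\sum_ic_i=1$ yields
\[
P_{\text{EPR}}\leq 2x_{n_1}(1-\tilde a)+2x_{n_3}-4x_{n_3}\Big[c+\sum_{i\in\mathcal{C}}a_i^2/c_i\Big].
\]
Cauchy--Schwarz gives $\sum_{i\in\mathcal{C}}a_i^2/c_i\geq(1-a)^2/(1-c)$, and a short algebraic manipulation (clear denominators) reduces the desired $c+(1-a)^2/(1-c)\geq(1-\tilde a)^2$ to the identity $(a-c)^2\geq 0$. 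The $P_W$ equality follows by the identical summation using the equality half of Lemma \ref{Lem:complexity}; the $-\tfrac{3}{4}P_{\text{EPR}}^{(i)}$ pieces aggregate into $-\tfrac{3}{4}P_{\text{EPR}}$ while the remainder collapses to $\tfrac{3}{2}x_{n_1}(1-\tilde a)+\tfrac{3}{2}x_{n_3}$.

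For achievability I would take the initial measurement to be binary with one outcome in $\mathcal{A}$ and one in $\mathcal{C}$, which saturates Cauchy--Schwarz automatically, and within each branch run the equal-or-vanish protocol from the proof of Lemma \ref{Lem:complexity}. Since equal-or-vanish produces $\ket{W}$ only on the single outcome where both auxiliary parties equalize, the constructed sub-block returns to $\ket{W}$ along exactly one branch, matching the ``furthermore'' clause. The main obstacle I anticipate is the asymmetric treatment of $\mathcal{A}$ versus $\mathcal{C}$: the clean $-4c_ix_{n_3}$ term in the $\mathcal{A}$-branch bound is replaced by the awkward $-4a_i^2x_{n_3}/c_i$ in the $\mathcal{C}$-branch, and the success of the argument hinges on the fact that the Cauchy--Schwarz slack is absorbed exactly by the $(a-c)^2\geq 0$ identity---the same identity also identifies binary measurements as the tight case.
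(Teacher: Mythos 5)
Your derivation of the two displayed bounds is essentially the paper's own proof: the same $\mathcal{A}/\mathcal{C}$ split according to which party carries the maximal component after the first measurement, the same branch-by-branch application of Lemma \ref{Lem:complexity} to the unnormalized coordinates $(c_i x_{n_1}, a_i x_{n_1}, a_i x_{n_3})$, the same summation, the same Cauchy--Schwarz step $\sum_{i\in\mathcal{C}}a_i^2/c_i\geq(1-a)^2/(1-c)$, and the same closing algebraic inequality. One small correction for the record: clearing denominators in $c+(1-a)^2/(1-c)\geq(1-\tilde a)^2$ leaves the slack $c\,(a-c)^2\geq 0$, not merely $(a-c)^2\geq 0$; the extra factor of $c$ matters for the tightness discussion. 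The $P_W$ identity is obtained exactly as in the paper.

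The genuine imprecision is in the saturation claim. A binary measurement with one outcome in $\mathcal{A}$ and one in $\mathcal{C}$ does \emph{not} automatically saturate the bound: Cauchy--Schwarz becomes an equality, but the final inequality still carries the slack $4x_{n_3}\,c\,\tilde a^2$, which vanishes only when $c=\sum_{i\in\mathcal{A}}c_i=0$ (or trivially $\tilde a=0$). For instance $M_1=\mathrm{Diag}[\sqrt{0.9},\sqrt{0.5}]$, $M_2=\mathrm{Diag}[\sqrt{0.1},\sqrt{0.5}]$ is binary with one outcome in each class yet strictly fails to saturate, so your statement that the identity ``identifies binary measurements as the tight case'' is too strong. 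The tight construction requires the $\mathcal{A}$-outcome to be the hard Kraus operator $\mathrm{Diag}[\sqrt{\tilde a},0]$ applied by party $n_1(\mbf{x})$ --- which decouples $n_1$ and leaves a bipartite state for $n_2,n_3$ that is converted to an EPR pair with probability $2\tilde a x_{n_3}$, rather than being fed to equal-or-vanish --- while the $\mathcal{C}$-outcome $\mathrm{Diag}[\sqrt{1-\tilde a},1]$ is followed by the equal-or-vanish protocol of Lemma \ref{Lem:complexity}; this is the paper's construction. Once you impose $c=0$ and treat the two branches asymmetrically in this way, the rest of your achievability argument (a single $\ket{W}$-producing branch inside equal-or-vanish) goes through and matches the ``furthermore'' clause.
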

\begin{proof}
Group outcomes $i$ into sets $\mc{A}$ and $\mc{C}$ as described above.  Applying Lemma \ref{Lem:complexity} to each of the post-measurement states yields
\begin{align}
\label{Eq:UBPtot}
P_{\text{EPR}}\leq& \sum_{i\in\mathcal{A}}\left(2x_{n_1}c_i+2x_{n_3}a_i-4x_{n_3}c_i\right)\notag\\&+\sum_{i\in\mathcal{C}}\left(2x_{n_1}a_i+2x_{n_3}a_i-4x_{n_3}\frac{a_i^2}{c_i}\right)\notag\\
\leq& 2x_{n_1}(1-\tilde{a})+2x_{n_3}-4x_{n_3}\left(c+\frac{(1-a)^2}{1-c}\right)\notag\\
\leq& 2x_{n_1}(1-\tilde{a})+2x_{n_3}-4x_{n_3}(1-(a-c))^2\notag\\
=&2x_{n_1}(1-\tilde{a})+2x_{n_3}-4x_{n_3}(1-\tilde{a})^2,
\end{align}
where the second inequality follows from the Cauchy-Schwarz Inequality: 
\[\left(\sum_{i\in\mathcal{C}}\sqrt{c_i}\cdot\sqrt{c_i}\right)\left(\sum_{i\in\mathcal{C}}\frac{a_i}{\sqrt{c_i}}\cdot\frac{ a_i}{\sqrt{c_i}}\right)\geq(\sum_{i\in\mathcal{C}}a_i)^2,\] and the third comes from the fact that $c+\frac{(1-a)^2}{1-c}\geq(1-(a-c))^2$.  The equality of $P_W$ again follows from Lemma \ref{Lem:complexity} as
\begin{align}
    P_W&=\frac{3}{2}x_{n_1}\left(\sum_{i\in\mc{A}}c_i+\sum_{i\in\mc{C}}a_i\right)+\frac{3}{2}x_{n_3}-\frac{3}{4}P_{\text{EPR}}\notag\\
    &=\frac{3}{2}x_{n_1}(1-\tilde{a})+\frac{3}{2}x_{n_3}-\frac{3}{4}P_{\text{EPR}}.
\end{align}
The upper bound on $P_{\text{EPR}}$ can be achieved by having party $n_1(\mbf{x})$ perform a binary outcome measurement with Kraus operators $M_1=\text{Diag}[\sqrt{1-\tilde{a}},1]$ and $M_2=\text{Diag}[\sqrt{\tilde{a}},0]$.  For outcome $2$, an EPR pair can be subsequently obtained between parties $n_2(\mbf{x})$ and $n_3(\mbf{x})$ with probability $2\tilde{a}x_{n_3}$.  For outcome $1$, the ``equal or vanish'' protocol is subsequently performed, as described in Lemma \ref{Lem:complexity}.  Since $x_{n_1}=x_{n_2}$, the total EPR probbility is then
\begin{align}
    P_{\text{EPR}}=&2\tilde{a}x_{n_3}+2(1-\tilde{a})(x_{n_1}+x_{n_3})-4x_{n_3}(1-\tilde{a})^2.\notag
\end{align}
  From the ``equal or vanish'' protocol we also obtain a W-state with probability  $P_W=3x_{n_3}(1-\tilde{a})^2$.
\end{proof}

We next convert an arbitrary canonical sub-block into a $W$-canonical sub-block.
\begin{lemma}
\label{Lem:W-canonical}
Any canonical sub-block of $\mc{P}$ can be replaced by a $W$-canonical sub-block without decreasing the total EPR probability or increasing the total number of sub-blocks in the protocol.
\end{lemma}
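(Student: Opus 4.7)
The plan is to reduce an arbitrary canonical sub-block to the explicit binary-outcome strategy constructed in Corollary~\ref{Cor:main}, and then to verify that this swap cannot hurt the overall EPR yield of $\mc P$ or enlarge its sub-block count. The key ingredients are already in place: Corollary~\ref{Cor:main} supplies an upper bound on $P_{\text{EPR}}$ for any canonical sub-block in terms of the first-measurement parameter $\tilde a$, and exhibits a saturating protocol whose only branch ever returning to $\ket W$ is the unique surviving branch of the ``equal or vanish'' subroutine. That saturating protocol is $W$-canonical by inspection, so applying the lemma iteratively from the root of $\mc P$ (where the initial state is $\ket W$) renders every sub-block $W$-canonical.

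First I would fix an arbitrary canonical sub-block $S$ with initial block state $\ket{\mbf x}$ and extract the parameter $\tilde a_S$ from its first local measurement (by Proposition~\ref{Prop:diagonal} we may take the Kraus operators diagonal, so $\tilde a_S$ is well-defined). Let $S'$ denote the sub-block from Corollary~\ref{Cor:main} built with the same $\tilde a_S$. Then $P_{\text{EPR}}^{S'} \ge P_{\text{EPR}}^S$, and the identity $P_W = \tfrac{3}{2}x_{n_1}(1-\tilde a) + \tfrac{3}{2}x_{n_3} - \tfrac{3}{4}P_{\text{EPR}}$ in Corollary~\ref{Cor:main} forces the affine trade-off
\[
P_W^{S'} - P_W^S \;=\; -\tfrac{3}{4}\bigl(P_{\text{EPR}}^{S'} - P_{\text{EPR}}^S\bigr) \;\le\; 0.
\]

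Next I would control the global effect of the swap. Decompose the terminal $\ket W$ exits of $S$ according to which downstream sub-tree each hands off to, letting $p_i$ and $Q_i$ denote the probability of and the conditional downstream EPR success at the $i$-th such exit, so that $\sum_i p_i = P_W^S$. When $S$ is replaced by $S'$, the single $\ket W$ exit of $S'$ can be re-attached to whichever downstream sub-tree achieves $Q_{\max} := \max_i Q_i$. Using $\sum_i p_i Q_i \le P_W^S\, Q_{\max}$ together with the trade-off above, the total EPR probability of $\mc P$ changes by at least
\[
\bigl(P_{\text{EPR}}^{S'} - P_{\text{EPR}}^S\bigr)\bigl(1 - \tfrac{3}{4}Q_{\max}\bigr) \;\ge\; 0,
\]
since $Q_{\max} \le 1$; hence the total yield does not decrease.

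Finally, the sub-block count is preserved because this is a one-for-one replacement of $S$ by $S'$, and since $S'$ has only a single $\ket W$ exit, the rerouted downstream chain is simply an already-existing best branch of $\mc P$. The main obstacle is the second step, because Corollary~\ref{Cor:main} is a local statement about a single sub-block whereas the lemma's claim is global. What closes the argument is the combination of (i) $S'$'s single $\ket W$ exit, which upgrades a weighted average of downstream yields into a single maximum, and (ii) the slope-$(-3/4)$ trade-off between $P_W$ and $P_{\text{EPR}}$, which is just steep enough to dominate the loss of $P_W$ so long as $Q_{\max} \le 1$.
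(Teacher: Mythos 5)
Your first step---replacing $S$ by the saturating binary-outcome protocol of Corollary~\ref{Cor:main} with the same $\tilde a$, and combining the slope-$(-3/4)$ trade-off between $P_W$ and $P_{\text{EPR}}$ with $\sum_i p_i Q_i \le P_W^S\, Q_{\max}$ to show the global yield cannot drop---is exactly the first half of the paper's argument and is correct. The gap is that this only produces a \emph{canonical} sub-block with a single terminal $\ket{W}$; it does not produce a \emph{$W$-canonical} sub-block, which by definition must also \emph{begin} with $\ket{W}$. Your replacement $S'$ still begins with the original block state $\ket{\mbf{x}}$, which in general is not $\ket{W}$. The paper's proof has a second step you omit entirely: it reaches back into the sub-block preceding $S$ and inserts an ``equal or vanish'' measurement by party $n_3(\mbf{x})$ that converts $\ket{\mbf{x}}$ into $\ket{W}$ with probability $3x_{n_3}$ or directly into $\ket{\Psi^+}$ with probability $2(x_{n_1}-x_{n_3})$, and then runs the Corollary~\ref{Cor:main} protocol from $\ket{W}$ with the same $\tilde a$. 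One must then verify quantitatively that this relocation does not lower the yield; the paper computes $P_S''-P_S'=2a(x_{n_1}-x_{n_3})\ge 0$. None of this appears in your write-up.

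Your suggested repair---iterating forward from the root, where the initial state is $\ket{W}$---does not close the gap, because the lemma (and Corollary~\ref{Cor:main}, whose derivation assumes the only block state reached inside the sub-block is $\ket{W}$ so that Lemma~\ref{Lem:complexity} applies branchwise) is only valid for canonical sub-blocks, and the early-layer sub-blocks of the original protocol are generally \emph{not} canonical: their terminal block states are arbitrary. This is precisely why Corollary~\ref{Cor:main2} iterates \emph{backward} from the last layer: converting a layer-$k$ sub-block to $W$-canonical form changes its initial block state to $\ket{W}$, which turns the terminal block states of the layer-$(k-1)$ sub-blocks into $\ket{W}$ and thereby makes them canonical, allowing the induction to proceed. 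Without the ``move the initial block state to $\ket{W}$'' step, neither the lemma as stated nor that backward induction goes through.
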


\begin{figure*}[t]
\includegraphics[scale=0.7]{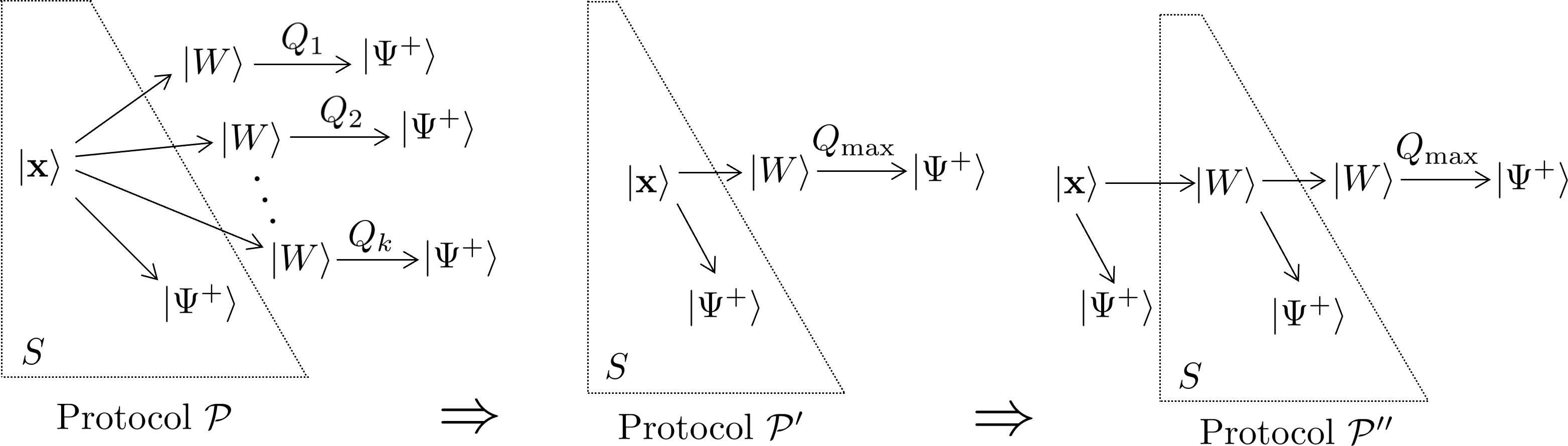}
\caption{\label{fig:protocol-reduction} Lemma \ref{Lem:W-canonical} describes the reduction of a canonical sub-block $S$ to a $W$-canonical sub-block in two steps.  The first is to reduce the number of terminal $\ket{W}$ states to just one.  The second replaces the initial block state $\ket{\mbf{x}}$ with $\ket{W}$ by performing a prior measurement on $\ket{\mbf{x}}$ in the preceding sub-block.}
\end{figure*}

\begin{proof}
Let $S$ be any canonical sub-block starting with block state $\ket{\mbf{x}}$.  Let $a_i$ and $c_i$ be the Kraus operator parameters of the initial measurement as before.  We transition $S$ into a $W$-canonical sub-block in two steps.  First, we limit the the total number of branches within the sub-block that obtain $\ket{W}$ to at most one.  In general, there may be multiple branches within the sub-block that obtain state $\ket{W}$.  For each of these, the protocol $\mc{P}$ specifies a subsequent distillation sub-protocol to be performed on $\ket{W}$.  Let $Q_{\max}$ denote the maximum probability of obtaining EPR pairs from $\ket{W}$ among all these sub-protocols.  Therefore, if $P_S$ is the total probability of obtaining EPR pairs when starting from the block state $\ket{\mbf{x}}$, we can use Corollary \ref{Cor:main} to obtain the bound
\begin{align}
P_S&\leq P_{\text{EPR}}+P_W\cdot Q_{\max}\notag\\
    &= P_{\text{EPR}}+\left(\frac{3}{2}x_{n_1}(1-\tilde{a})+\frac{3}{2}x_{n_3}-\frac{3}{4}P_{\text{EPR}}\right)Q_{\max}\notag\\
    &= \frac{3}{2}x_{n_1}(1-\tilde{a})+\frac{3}{2}x_{n_3}+\left(1-\tfrac{3}{4}Q_{\max}\right)P_{\text{EPR}}.
\end{align}
Since the coefficient of $P_{\text{EPR}}$ is non-negative on the RHS, we can further bound $P_S$ using Eq. \eqref{Eq:EPR-bound}.  This bound can be saturated by performing the optimal binary-outcome measurement scheme of Corollary \ref{Cor:main} on the initial block state $\ket{\mbf{x}}$.  For the single $\ket{W}$ outcome, the original sub-protocol that attains EPR probability $Q_{\max}$ is then performed.  Hence, if $P_S'$ denotes the total EPR probability starting from $\ket{\mbf{x}}$ in our modified protocol $\mc{P}'$, then
\begin{align}
    P_{S}'&=2x_{n_1}(1-\tilde{a})+2x_{n_3}-4x_{n_3}(1-\tilde{a})^2\notag\\&+3x_{n_3}(1-\tilde{a})^2\cdot Q_{\max}\notag\\
    &\geq P_S.
\end{align}

Next, we modify the initial block state $\ket{\mbf{x}}$ to be $\ket{W}$ (if $\ket{\mbf{x}}\not=\ket{W}$).  This is accomplished by stochastically transforming $\ket{\mbf{x}}$ into $\ket{W}$ or $\ket{\Psi^+}$ in the sub-block prior to $S$ by having party $n_3(\mbf{x})$ measure with Kraus operators $M_1=\text{Diag}[\sqrt{x_{n_3}/x_{n_1}},1]$ and $M_2=\text{Diag}[\sqrt{1-x_{n_3}/x_{n_1}},0]$.  The probability of obtaining $\ket{W}$ (outcome $1$) is $3x_{n_3}$ while the probability of $\ket{\Psi^+}$ (outcome $2$) is $2(x_{n_1}-x_{n_3})$.  Now $\ket{W}$ is the initial block state of $S$.  The optimal transformation of Corollary \ref{Cor:main} is performed on $\ket{W}$ with the measurement parameter $\tilde{a}$ being the same as in $\mc{P}'$.  This yields EPR states within sub-block $S$ with probability $P_{\text{EPR}}=\frac{2}{3}(1-\tilde{a})+\frac{2}{3}-\frac{4}{3}(1-\tilde{a})^2=\frac{2}{3}(3-2\tilde{a})\tilde{a}$, while the W-state is obtained with probability $P_W=(1-\tilde{a})^2$.  For the $\ket{W}$ outcome, the $Q_{\max}$ sub-protocol is again performed in subsequent sub-blocks.  Letting $P_{S}''$ denote the total EPR probability starting from $\ket{\mbf{x}}$ in this new protocol $\mc{P}''$, we have
\begin{align}
    P_S''=&2(x_{n_1}-x_{n_3})+2x_{n_3}(3-2\tilde{a})\tilde{a}\notag\\
    &+3x_{n_3}(1-\tilde{a})^2\cdot Q_{\max}.
\end{align}
One can verify that
\begin{align}
    P_S''-P_S\geq P_{S}''-P_S'=2a(x_{n_1}-x_{n_3})\geq 0.
\end{align}
Hence, we have transformed an arbitrary canonical sub-block $S$ into a $W$-canonical sub-block without decreasing the total EPR probability or increasing the number of sub-blocks in the protocol.

\end{proof}

\begin{corollary}
\label{Cor:main2}
Any $N$-block LOCC protocol $\mathcal{P}$ performing random-party EPR distillation can be converted into an $N$-block protocol consisting only of $W$-canonical sub-blocks and obtaining EPR pairs with at least as large as a probability as $\mathcal{P}$.
\end{corollary}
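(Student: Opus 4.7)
The plan is to invoke Lemma \ref{Lem:W-canonical} inductively on every sub-block of $\mc{P}$, processing the layers in reverse order from layer $N$ down to layer $1$. The central observation is that once every sub-block in layers $k{+}1,\ldots,N$ has been put into $W$-canonical form, each layer-$k$ sub-block automatically becomes canonical, which is precisely the hypothesis required to apply Lemma \ref{Lem:W-canonical} once more.

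For the base case, every sub-block in layer $N$ is canonical by definition: no subsequent sub-block is reached along any of its branches, so all of its terminal states are EPR pairs or product states. Lemma \ref{Lem:W-canonical} then applies directly, converting each layer-$N$ sub-block to $W$-canonical form, weakly increasing the total EPR probability and not increasing the total number of sub-blocks. For the inductive step, suppose every sub-block in layers $k{+}1,\ldots,N$ is already $W$-canonical, so that each such sub-block begins with $\ket{W}$. The second step of the reduction in Lemma \ref{Lem:W-canonical}, as applied when each layer-$(k{+}1)$ sub-block was converted, has appended a binary measurement to the preceding layer-$k$ sub-block whose outcomes are $\ket{W}$ or $\ket{\Psi^+}$. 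Consequently every terminal block state of every layer-$k$ sub-block is now $\ket{W}$, so all layer-$k$ sub-blocks are canonical and Lemma \ref{Lem:W-canonical} applies once more. Iterating from $k=N{-}1$ down to $k=1$ yields a protocol consisting entirely of $W$-canonical sub-blocks.

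The main subtlety, and the step I expect to require the most care, is verifying that the iterative extensions introduced by the second step of Lemma \ref{Lem:W-canonical} do not silently alter the $N$-block structure of the protocol. Each appended binary measurement has outcomes $\ket{W}$ (a block state that continues into the next sub-block) and $\ket{\Psi^+}$ (a terminating EPR outcome), so no additional sub-block is inserted between consecutive layers along any branch. Thus the number of sub-blocks along every branch---and hence the layer assignment of every sub-block---is preserved throughout the induction, and the resulting protocol still has depth $N$. Because Lemma \ref{Lem:W-canonical} weakly increases the EPR probability at every step, the final all-$W$-canonical protocol distills EPR pairs with probability at least that of $\mc{P}$.
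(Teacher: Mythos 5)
Your proof is correct and follows essentially the same route as the paper: a backward, layer-by-layer induction from layer $N$ to layer $1$, applying Lemma \ref{Lem:W-canonical} at each layer and observing that converting one layer to $W$-canonical form renders the preceding layer canonical. Your extra check that the appended binary measurement preserves the $N$-block structure is a nice explicit verification of what the paper leaves to the statement of Lemma \ref{Lem:W-canonical}.
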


\begin{proof}
  All sub-blocks at layer $N$ are canonical, and we perform Lemma \ref{Lem:W-canonical} to convert them to $W$-canonical sub-blocks.  We then move backward to layer $N-1$.  All of these sub-blocks must be canonical since their terminal block states have been converted to $\ket{W}$ in the previous step.  We can thus transform all these canonical sub-blocks to $W$-canonical sub-blocks.  Proceeding iteratively layer-by-layer back to the original state $\ket{W}$, we obtain the stated structure of Corollary \ref{Cor:main2}.
\end{proof}

We now possess the necessary components to prove a general lower bound.
\begin{theorem}
\label{Thm:main}
Let $P_{\tot}$ be the total probability of obtaining EPR pairs from $\ket{W}$ in some LOCC random-party distillation protocol $\mc{P}$.  Then
\begin{equation}
\label{Eq:lower-bound}
\text{Rounds in $\mc{P}$}\geq \frac{1}{1-P_{tot}}-2.
\end{equation}
Moreover, this bound is tight.
\end{theorem}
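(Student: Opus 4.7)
The plan is to show $P_{\tot} \leq (R+1)/(R+2)$ for any $R$-round protocol $\mc P$ performing random-party EPR distillation of $\ket W$, by strong induction on $R$, using the structural reduction from Corollary \ref{Cor:main2}. By that corollary I may assume $\mc P$ is realized as a chain of W-canonical sub-blocks linked along the unique $\ket W$-returning branch, without loss of $P_{\tot}$. Each intermediate sub-block occupies three rounds -- one measurement by $n_1$, then one each by $n_2$ and $n_3$ -- since the $\ket W$-returning branch requires all three parties to execute their equal-or-vanish steps.

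For the inductive step, suppose $R \geq 4$. Then the reduced protocol must have at least two sub-blocks, so its first sub-block is a full three-round W-canonical sub-block on $\ket W$. Corollary \ref{Cor:main} specialized to $x_{n_1}=x_{n_2}=x_{n_3}=1/3$ gives the first sub-block's EPR and $\ket W$-return probabilities as $\frac{2}{3}(3-2\tilde a_1)\tilde a_1$ and $(1-\tilde a_1)^2$, respectively, in terms of a single parameter $\tilde a_1 \in [0,1]$. The continuation on the $\ket W$-branch is an $(R-3)$-round protocol starting afresh from $\ket W$, so by the inductive hypothesis its EPR probability is at most $(R-2)/(R-1)$. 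Assembling,
$$P_{\tot} \;\leq\; \frac{2}{3}(3-2\tilde a_1)\tilde a_1 + (1-\tilde a_1)^2\cdot\frac{R-2}{R-1}.$$
A one-dimensional calculus exercise locates the interior maximum at $\tilde a_1^\star = 3/(R+2)$, at which the right-hand side collapses exactly to $(R+1)/(R+2)$, closing the induction.

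The base cases $R\in\{0,1,2,3\}$ are dispatched by analyzing a single W-canonical sub-block. For $R=3$, Corollary \ref{Cor:main} gives $P_{\tot}\leq 3/4$, safely below $(R+1)/(R+2)=4/5$; for $R\leq 2$ only the $n_1$ vanish outcome contributes EPR (a single follow-up measurement cannot rebalance the unequal bipartite remainder of the keep branch into a maximally entangled pair), so $P_{\tot}\leq 2/3$ suffices. For tightness, I exhibit for every $R=3k-2$ with $k\in\mbb N$ the explicit protocol consisting of $k-1$ full W-canonical sub-blocks followed by a one-round ``vanish-only'' terminator, with parameters $\tilde a_j=1/(k-j+1)$; a short recursive computation confirms that it attains $P_{\tot}=(3k-1)/(3k)=1-1/(R+2)$, saturating the bound along this arithmetic progression of round counts.

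The main obstacle is not the calculus but the accounting between sub-blocks and rounds: one must carefully justify that each intermediate W-canonical sub-block genuinely consumes three rounds (so that the induction can peel off a complete first sub-block and recurse on the remaining $R-3$ rounds), and one must argue that truncating the final sub-block to one or two rounds cannot outperform the clean ``vanish'' terminator used in the tight construction. Once these structural accounting points are pinned down, the induction and the interior optimization together produce the stated bound.
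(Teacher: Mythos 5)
Your proof follows essentially the same route as the paper: reduce to a chain of $W$-canonical sub-blocks via Corollary \ref{Cor:main2} and then optimize the single per-block parameter, where your backward induction with interior optimum $\tilde{a}^\star = 3/(R+2)$ is exactly the paper's recursion $a_k = a_{k+1}/(1+a_{k+1})$ (obtained there by setting all partial derivatives to zero simultaneously), and your tight construction $\tilde{a}_j = 1/(k-j+1)$ coincides with the paper's $a_k = 1/(N-(k-1))$. The one point you flag but leave open --- that each non-terminal sub-block of the original protocol genuinely consumes at least three rounds of alternation --- is asserted with comparable brevity in the paper itself, so your argument stands on the same footing as the published proof.
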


\begin{figure}[b]
\centering
\includegraphics[scale=0.55]{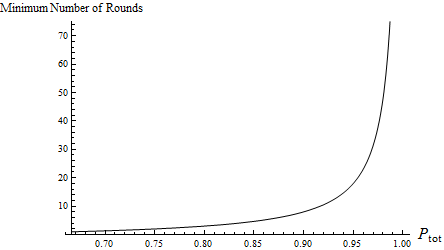}
\caption{\label{Fig:LowerBoundPlot} Minimum number of LOCC rounds to obtain EPR pairs from $\ket{W}$ with probability $P_{\tot}$.  This lower bound is tight.}
\end{figure}
\begin{proof}
Suppose that $\mc{P}$ is a finite-round protocol; otherwise Eq. \eqref{Eq:lower-bound} holds trivially.  In light of Corollary \ref{Cor:main2}, we can assume without loss of generality that $\mc{P}$ consists only of $W$-canonical sub-blocks.  For a non-negative number $a_k$, the probability of distilling EPR pairs in the $k^{th}$ sub-block is $\frac{2}{3}(2-a_k)-\frac{4}{3}B_{k}$, where $B_{k}=(1-a_k)^2$ is the probability of obtaining $\ket{W}$ and therefore carrying the protocol onto sub-block $k+1$.  Consequently, the total EPR probability for an $N$-block protocol is
\begin{align}
P_{\tot}=&\frac{4}{3}-\frac{2}{3}a_1-\frac{4}{3}B_{1}+B_{1}(\frac{4}{3}-\frac{2}{3}a_2-\frac{4}{3}B_{2})\notag\\&+...+\frac{2}{3}(B_1B_2\cdots B_{N-1})\notag\\
=&\frac{4}{3}-\frac{2}{3}a_1-\frac{2}{3}B_1a_2\notag\\&-...-\frac{2}{3}(B_1\cdots B_{N-2})a_{N-1}-\frac{2}{3}(B_1\cdots B_{N-1})\notag\\
=&\frac{4}{3}-\frac{2}{3}a_1-\frac{2}{3}\sum_{i=1}^{N-1}\prod_{j=1}^{i}B_ja_{i+1},
\end{align}
where $a_N=1$.  To optimize, we compute partial derivatives, which ultimately leads to a system of equations:
\begin{align}
0&=1-2(1-a_{N-1})\notag\\
0&=1-2(1-a_{N-2})(a_{N-1}+B_{N-1})\notag\\
0&=1-2(1-a_{N-3})(a_{N-2}+B_{N-2}(a_{N-1}+B_{N-1}))\notag\\
&\vdots\notag
\end{align}
The solutions are given by the recursive formula $a_k=\frac{a_{k+1}}{1+a_{k+1}}$ for $k=1,...,N-1$.  Thus, $a_k=\frac{1}{N-(k-1)}$ and $B_k=\left(\frac{N-k}{N-(k-1)}\right)^2$.  Substituting back into $P_{tot}$, we find that for any $N$-block LOCC distillation of $\ket{W}$,
\begin{align}
P_{\tot}&\leq\frac{4}{3}-\frac{2}{3}\frac{1}{N}-\frac{2}{3}\sum_{i=1}^{N-1}\prod_{j=1}^i\left(\frac{N-j}{N-(j-1)}\right)^2\left(\frac{1}{N-i}\right)\notag\\
&=\frac{4}{3}-\frac{2}{3}\frac{1}{N}-\frac{2}{3}\sum_{i=1}^{N-1}\frac{N-i}{N^2}\notag\\
&=1-\frac{1}{3N}.
\end{align}
This upper bound can be achieved by using measurements with the optimal values of $a_k$.  Finally, in terms of LOCC round number, we have that each sub-block contains at least three rounds unless it is the final sub-block, which only requires one sub-block since it involves fixed-party distillation.  Therefore, we obtain Eq. \eqref{Eq:lower-bound} with a corresponding plot in Fig. \ref{Fig:LowerBoundPlot}.
\end{proof}

\begin{remark}
Throughout this section we have been modeling an LOCC protocol as a discrete sequence time steps in which only a single party measures at each step.  In this model, the round number of the protocol is the largest number of times the measuring party switches in some branch.  By only allowing one party to measure at a time, the protocol is easier to analyze.  However, a more general model allows all parties to measure at each time step.  This has been referred to as \textit{broadcast} LOCC (BLOCC) in Ref. \cite{Gonzales-2020a}, and this model has also been consider in Ref. \cite{Rozpiedek-2018a}.  In the BLOCC setting, a protocol's round number is then just the largest number of broadcasts made along some branch.  In Section \ref{Sect:lower-bounds-F-L-Protocols}, we adopt the BLOCC model to count the number of iterations in the F-L protocol for convenience.\par Notice that the BLOCC round complexity of a task can be reduced by no more than a factor of $N$, compared to the ``one at a time'' round complexity.  In terms of random-party EPR distillation of $\ket{W}$, the lower bound of Theorem \ref{Thm:main} becomes
\begin{equation}
\notag
\text{Number of classical broadcasts in $\mc{P}$}\geq \frac{1}{3(1-P_{tot})}.
\end{equation}
This is also tight since each $W$-canonical sub-block requires only one round of broadcast communication to implement. 
\end{remark}

\begin{remark}
The lower bound of Theorem \ref{Thm:main} matches in form the original F-L distillation probability \cite{Fortescue-2007a} as well as the slightly optimized distillation probability by Li \textit{et al.} \cite{Li-2020a}.
\end{remark}

\subsection{Lower Bounds in Concurrence Distillation}

\subsubsection{Infinite Rounds is Optimal}

We now turn to the task of random-party concurrence distillation.  We first establish that every finite-round protocol is sub-optimal.
\begin{theorem}
\label{Thm:unbounded-concurrence}
Suppose $\mc{P}$ is an $N$-round total random-party concurrence distillation protocol on a tripartite W-state $\ket{\mbf{x}}$.  Then there exists a protocol $\mc{P}'$ having more than $N$ rounds but distills a larger average concurrence than $\mc{P}$.  The same statement holds for $\star$-random-party concurrence distillation.   
\end{theorem}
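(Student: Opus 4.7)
The plan is to combine the monotonicity of $\zeta$ from Lemma~\ref{Lem:total-concurrence} with the observation that the achievability protocol constructed there uses two unbounded-round Fortescue--Lo subroutines whose expected concurrence can be driven arbitrarily close to $\zeta(\mbf{x})$ by taking more iterations. The key fact I would prove first is the strict inequality $\overline{C}(\mc{P}) < \zeta(\mbf{x})$ for every finite-round protocol $\mc{P}$ on a genuinely tripartite $\ket{\mbf{x}}$. Once this is in hand, I pick $\mc{P}'$ to be the Lemma~\ref{Lem:total-concurrence} protocol with the Fortescue--Lo depths $n,n'$ chosen large enough that both $\overline{C}(\mc{P}') > \overline{C}(\mc{P})$ and the round count of $\mc{P}'$ exceeds $N$.

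To prove the strict inequality, note that $\zeta(\mbf{y}) = C(\mbf{y})$ for any bipartite $\ket{\mbf{y}}$ (the last two terms of $\zeta$ vanish when some coordinate is zero), so the monotonicity half of Lemma~\ref{Lem:total-concurrence} gives $\overline{C}(\mc{P}) = \sum_i p_i \zeta(\mbf{y}_i) \leq \zeta(\mbf{x})$, with equality forcing $\zeta$ to be conserved at every node of the protocol tree. Because $\ket{\mbf{x}}$ is genuinely tripartite while every success branch terminates at a state with at least one vanishing coordinate, some measurement along $\mc{P}$ must drive a coordinate to zero. If a sub-maximal party ($n_2$ or $n_3$) performs that measurement, Lemma~\ref{Lem:total-concurrence} supplies a strict decrease immediately. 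The only remaining possibility is that the current maximum-component party $n_1$ performs the collapsing measurement, and then the unique non-trivial Kraus branch that yields bipartite (rather than tripartite or product) output is a $\op{0}{0}$-type projection on the $n_1$ qubit; a direct Kraus-expansion calculation then shows that the weighted sum $\sum_i p_i \zeta(\mbf{y}_i)$ drops strictly below $\zeta(\mbf{x})$, with the gap being positive thanks to the tripartite inequality $\zeta(\mbf{x}) > 2\sqrt{x_{n_2} x_{n_3}}$. Either way, $\overline{C}(\mc{P}) < \zeta(\mbf{x})$, so setting $\varepsilon = \tfrac{1}{2}\bigl(\zeta(\mbf{x}) - \overline{C}(\mc{P})\bigr) > 0$ and invoking the truncation argument yields the required $\mc{P}'$.

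The $\star$-random-party statement is proved by the same scheme, with $\zeta$ and Lemma~\ref{Lem:total-concurrence} replaced by $\zeta^\star$ and Lemma~\ref{Lem:restricted-concurrence}. Strict decrease of $\zeta^\star$ occurs whenever party $\star$ or $n_1$ measures non-trivially, and any finite-round protocol delivering entanglement between $\star$ and another party must either have $\star$ measure, or have $n_1$ carry out a $\op{0}{0}$-projection so that $\star$ ends up bipartite with $n_2$, or have $n_2$ project to leave $(\star,n_1)$ bipartite; the first two cases trigger the Lemma~\ref{Lem:restricted-concurrence} strict-decrease, while the third is the analogue of the $n_1$-projection subcase from before. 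The main obstacle is precisely this Kraus-expansion check ruling out $\zeta$-preservation by $n_1$-projections, which follows the same template as the monotonicity computation referenced in Appendix~\ref{Appendix-concurrence-monotone}.
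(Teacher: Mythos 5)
Your high-level strategy---prove that every finite-round protocol satisfies $\overline{C}(\mc{P})<\zeta(\mbf{x})$ strictly, then beat $\overline{C}(\mc{P})$ with a sufficiently deep truncation of the Lemma~\ref{Lem:total-concurrence} protocol---is a genuinely different route from the paper's. The paper instead makes a purely local replacement: it locates the \emph{last} tripartite entangled state $\ket{\mbf{x}'}$ appearing in $\mc{P}$, observes that the measurement performed there must be a hard decoupling measurement whose average concurrence yield is bounded by the concurrence of assistance $2\sqrt{x'_{i}x'_{j}}$, notes that $\zeta(\mbf{x}')$ and $\zeta^\star(\mbf{x}')$ strictly exceed this value, and substitutes the Lemma~\ref{Lem:total-concurrence}/\ref{Lem:restricted-concurrence} continuation at that node. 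Your route could be made to work, but as written the strictness step contains a genuine gap.

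The gap is in your case analysis. You claim that if a sub-maximal party ($n_2$ or $n_3$) performs the coordinate-collapsing measurement, strict decrease of $\zeta$ is immediate, and that only the $n_1$ case requires a separate Kraus computation. But measurements by $n_3$ do \emph{not} in general strictly decrease $\zeta$: Step 1 of the paper's own achievability protocol has $n_3$ perform precisely such a collapsing ``equal or vanish'' measurement (one outcome is a $\op{0}{0}$-type projection producing a bipartite state) while preserving $\zeta$ on average---this is exactly what makes that protocol optimal---and Appendix~\ref{Appendix-concurrence-monotone}, Case~(iii), records that equality $\deltac=0$ holds for any diagonal measurement by $n_3$. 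What the appendix actually establishes is strict decrease under non-trivial measurements by $n_1$ and $n_2$ (Cases~(i) and~(ii)), so the case you single out for a bespoke computation ($n_1$) is the one already covered, while the case you dismiss as immediate ($n_3$) is the one that genuinely needs an argument. A repair along your lines is possible: a $\zeta$-preserving collapse by $n_3$ necessarily leaves a surviving tripartite branch with $x_{n_2}=x_{n_3}$, after which any further non-trivial measurement strictly decreases $\zeta$ (Cases~(iv)--(vi)), and a finite-round protocol cannot avoid such a measurement. Alternatively the paper's last-tripartite-state argument disposes of all cases at once, since every branch of the terminal measurement yields at most $2\sqrt{x'_{n_1}x'_{n_2}}<\zeta(\mbf{x}')$. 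The same issue recurs in your $\star$-random-party case: the ``$n_2$ projects'' branch can preserve $\zeta^\star$ (Step~1 of the Lemma~\ref{Lem:restricted-concurrence} protocol does exactly this), so deferring it to an unperformed ``Kraus-expansion check'' leaves the strict inequality unproven.
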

\begin{proof}
Consider the last instance in the protocol for which some local measurement is performed on a tripartite entangled W-state.  Call this state $\ket{\mbf{x}'}$.  Since this is the final time such a state appears in the protocol, the local measurement must be a hard measurement that completely decouples the measuring party from the other two.  Moreover, the protocol halts after this measurement since any further bipartite processing can never increase the concurrence.  The greatest average post-measurement concurrence obtained from a hard measurement on $\ket{\mbf{x}'}$ is called the concurrence of assistance (CoA) \cite{Laustsen-2003a}, and if $(x_A',x_B',x_C')$ are the coordinates of state $\ket{\mbf{x}'}$, then
\begin{equation}
\text{CoA of $\ket{\mbf{x}'}$} \leq 2\sqrt{x'_{i}x'_j},
\end{equation}
where $i,j\in\{A,B,C\}$ are the two non-measuring parties \cite{Laustsen-2003a}.  It is obvious that both $\zeta(\mbf{x}')$ and $\zeta^\star(\mbf{x}')$ are strictly larger than this value.  Hence if instead of performing the hard measurement and terminating the protocol the parties were to continue with the protocols of Lemma \ref{Lem:total-concurrence} and \ref{Lem:restricted-concurrence}, respectively, they would obtain a greater average concurrence.
\end{proof}
This result shows that infinite-round LOCC (i.e. protocols with unbounded round number) is needed to optimally distill concurrence from W-class states.

\begin{remark}
As noted above Lemma \ref{Lem:restricted-concurrence}, for the task of $\star$-random-party EPR distillation, an optimal protocol can be found that consumes only three rounds of LOCC.  Thus, we see that finite rounds of LOCC suffice to optimize the entanglement measure $E_{2}^{(\star-\rnd)}$, whereas infinite rounds of LOCC are needed to optimize the measure $C^{(\star-\rnd)}$, even though the tasks share the same goal of breaking tripartite entanglement into bipartite form.
\end{remark}

\begin{remark}
The proof of Theorem \ref{Thm:unbounded-concurrence} uses the fact that both $\zeta$ and $\zeta^\star$ are defined on the entire class of W-states, including those states $(x_A,x_B,x_C)$ for which $x_0\not=0$.  In contrast, the proof will not directly work for $E_2^{(\rnd)}$ since  the equality $E_2^{(\rnd)}(\x)=\kappa(\x)$ only holds for states with $x_0=0$, and a general LOCC protocol might attain states for which the latter condition does not hold.  However, by using Proposition \ref{Prop:diagonal} and its generalization to arbitrary W-class states, one can assume without loss of generality that only diagonal Kraus operators are used on the protocol.  Hence if the initial states $\x$ has $x_0=0$, then it will remain zero throughout.  Consequently, we can repeat the same argument as Theorem \ref{Thm:unbounded-concurrence} and likewise conclude that $E_2^{(\rnd)}(\x)$ is not achievable in finite round LOCC whenever $x_0=0$.
\end{remark}

\subsubsection{Lower Bounds for F-L Protocols}

\label{Sect:lower-bounds-F-L-Protocols}

We now consider the problem of finite round concurrence distillation. Specifically, we restrict the protocol to be the ones in Section \ref{Sect:Total-Random-Party-Distillation} and \ref{Sect:Star-Random-Party-Distillation}, but allow the $\epsilon$ parameters in the F-L measurements to vary depending on the round number. We find the optimal choices of these $\epsilon$'s given the total number of BLOCC rounds.\par

\noindent\textit{$\star$-Random-Party Concurrence Distillation.}
We first find the optimal protocol for the $\star$-random-party concurrence distillation task.  We consider just the type of protocol specified in Section \ref{Sect:Star-Random-Party-Distillation}.\par
The first step is always a hard measurement on party $n_2$ if $x_{n_2}\not=x_{n_1}$, but can be omitted if they are equal. We thus exclude this step in our round counts for either case, assuming an additional round if $x_{n_2}\not=x_{n_1}$. The rest of the protocol consists of Fortescue-Lo measurements but with the $\epsilon_k$ now able to vary for each round $k=1,\dots,N-1$. In the last round, we perform a hard measurement on party $n_2$ to retrieve the concurrence in the bipartite state between party $\star$ and $n_1$. Then the concurrence distilled for round $k$ is
\begin{align}
    C_k =
    \begin{cases}
                4x_{n_2}\sqrt{\frac{x_\star}{x_{n_1}}}\prod_{j=1}^{k-1}(1-\epsilon_j)^{\frac{3}{2}}
                \epsilon_k\sqrt{1-\epsilon_k}, &k<N,\\
                2x_{n_2}\sqrt{\frac{x_\star}{x_{n_1}}}\prod_{j=1}^{N-1}(1-\epsilon_j)^{3/2}, & k=N.
    \end{cases}
\end{align}
Our goal is to maximize the total concurrence $\sum_{k=1}^N C_k$ by varying $\epsilon_k$, $k=1,\dots,N$. By requiring the partial derivatives to be zero, the $\epsilon_k$'s are calculated to be
\begin{align}
    \epsilon_k =
    \begin{cases}
                1/3 & \text{for } k=N-1,\\
                 \frac{1}{1+1/(2-3A_k-\frac{3}{2}B_k)}& \text{for } k=1,\dots,N-2,
    \end{cases}
\end{align}
where\begin{align}
    A_k &= \sum_{j=k+1}^{N-1}\left[\prod_{i=k+1}^{j-1}(1-\epsilon_i)^{3/2}\right]\epsilon_j\sqrt{1-\epsilon_j},\\
    B_k &= \prod_{i=k+1}^{N-1}(1-\epsilon_i)^{3/2}.
\end{align}
Note that each $\epsilon_k$ only depends on $\epsilon_j$'s with $j>k$. So these parameters can be calculated one by one in a descending manner, starting from $\epsilon_{N-1}=1/3$. \par
For a $\ket{W}$ state, the relation between the number of rounds and the optimal average concurrence is plotted in Fig. \ref{Fig:StarLowerBoundPlot}. Note that as the number of rounds increases, the concurrence distilled approaches the asymptotic bound $\zeta^\star$.
\begin{figure}[H]
\includegraphics[width=8cm]{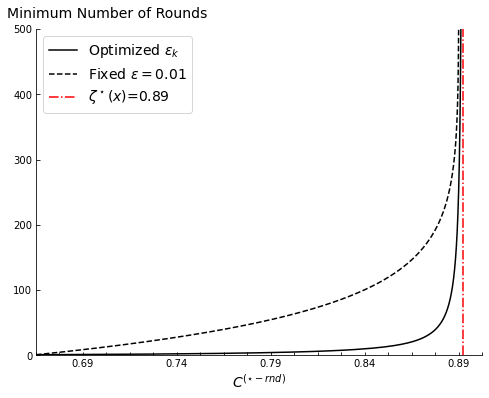}
\caption{\label{Fig:StarLowerBoundPlot} Minimum number of BLOCC rounds to obtain average bipartite concurrence $C^{(\star\text{-}\rnd)}$ from $\ket{W}$. Comparison is made between choosing a uniform $\epsilon$ and optimizing $\epsilon_k$ for each round.}
\end{figure}

\noindent\textit{Total Random-Party Concurrence Distillation.}
\begin{figure}[h]
\includegraphics[width=8cm]{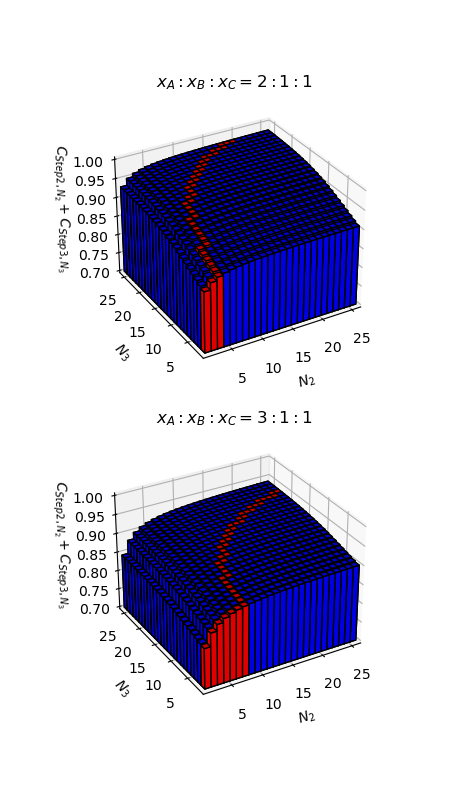}
\caption{\label{Fig:3DPlot} Maximum average bipartite concurrence that can be distilled using $N_2$ number of BLOCC rounds in Step 2 and $N_3$ in Step 3. Each red bar represents the best value in the diagonal line where $N_2+N_3$ is constant. Notice that for different states there are different ways to optimally distribute the number of rounds into each step. For states with a large ratio between $x_A$ and $x_C$, more rounds are needed in Step 2 to gradually lower the ratio.}
\end{figure}
We next find the optimal finite-round parameters for total random party concurrence distillation protocol described in Section \ref{Sect:Total-Random-Party-Distillation}. Notice that in this case there are two separate steps (Step 2 and 3) that require infinitesimal measurements to achieve the asymptotic bound. Therefore, for a fixed finite number of rounds $N$, we distribute it so that we spend $N_2$ in step 2 and $N_3$ in step 3 with $N=N_2+N_3$.\par
In step 2, the expected concurrence distilled for round $k$ is 
\begin{align}
    C_{\text{Step }2,k} = 4x_{C}\sqrt{\frac{x_A}{x_{B}}}\prod_{j=1}^{k-1}(1-\epsilon_j)^{3/2}\epsilon_k\sqrt{1-\epsilon_k}, 
\end{align}
where $k=1,\dots,N_2$. As in the asymptotic protocol, we require that $\prod_{k=1}^{N_2}(1-\epsilon_k)=x_B/x_A=\gamma$. Given a fixed $N_2$, we wish to maximize $C_{\text{Step }2, N_2}:=\sum_{k=1}^{N_2} C_{\text{Step }2,k}$ over $\epsilon_k$'s subject to the constraint above.  Using the Lagrange multiplier method, the maximizing condition is 
\begin{align}
    \nabla_{\lambda, \epsilon_1, \dots ,\epsilon_{N_2}} \left[ \sum_{k=1}^{N_2} C_{\text{Step }2,k} + \lambda(\prod_{k=1}^{N_2}(1-\epsilon_k)^{3/2}-\gamma^{3/2})\right]&\nonumber\\
    =0.&
\end{align}
Solving this we get \begin{align}
    (1-\epsilon_{k})=\begin{cases}
                \frac{1}{3(1-\lambda)}, & \text{for } k=N_2,\\
                \frac{1}{3-2\sqrt{1-\epsilon_{k+1}}}, & \text{for } k=1, 2,\dots, N_2-1.
    \end{cases}
\end{align}
Therefore, every $\epsilon_k$ can be expressed in terms of $\lambda$, and the exact number can be calculated by plugging into the constraint that $\prod_{k=1}^{N_2}(1-\epsilon_k)=x_B/x_A=\gamma$.\par
In step 3, the expected concurrence distilled for round $k$ is 
\begin{align}\label{Eqn:concurrence-step3}
    C_{\text{Step }3,k} = \begin{cases}
               6\frac{x_Bx_C}{x_A}\prod_{j=1}^{k-1}(1-\epsilon_j)^{2}
                \epsilon_k(1-\epsilon_k),  &k<N_2,\\
                2\frac{x_Bx_C}{x_A}\prod_{j=1}^{k-1}(1-\epsilon_j)^{2},  &k=N_2,
    \end{cases}
\end{align}
with a hard measurement on party C in the end. (With an abuse of notation we use $\epsilon_k$'s for the parameters in both step 2 and step 3, but they are different variables.) With a fixed $N_3$, we wish to maximize $C_{\text{Step }3,N_3}:=\sum_{k=1}^{N_3}C_{\text{Step }3,k}$ over $\epsilon_k$'s. By requiring the partial derivatives with respect to $\epsilon_k$'s to be zero, we get 
\begin{align}
    \epsilon_k =
    \begin{cases}
                1/4 & \text{for } k=N_3-1,\\
                 1-\frac{3}{2(3-3A_k-B_k)}& \text{for } k=1,\dots,N_3-2,
    \end{cases}
\end{align}
where\begin{align}
    A_k &= \sum_{j=k+1}^{N_3-1}\left[\prod_{i=k+1}^{j-1}(1-\epsilon_i)^{2}\right]\epsilon_j\sqrt{1-\epsilon_j},\\
    B_k &= \prod_{i=k+1}^{N_3-1}(1-\epsilon_i)^{2}.
\end{align}
Again, the $\epsilon_k$'s can be calculated from the last one to the first, and the total concurrence distilled can be calculated by plugging the values into \eqref{Eqn:concurrence-step3}. 

The final goal is to determine the best way to distribute the total number of rounds $N$ into $N_2$ and $N_3$, i.e. \begin{align}
    \max_{N_2, N_3\in \mathbb N \text{ s.t. } N=N_2+N_3} C_{\text{Step }2,N_2}+C_{\text{Step }3,N_3}.
\end{align} This can be done by brute-force search over the $N-1$ ways of distributing round numbers to Step 2 and 3. Results are shown in Fig. \ref{Fig:3DPlot} and Fig. \ref{Fig:total2DPlot}.

\begin{figure}[h]
\includegraphics[width=8cm]{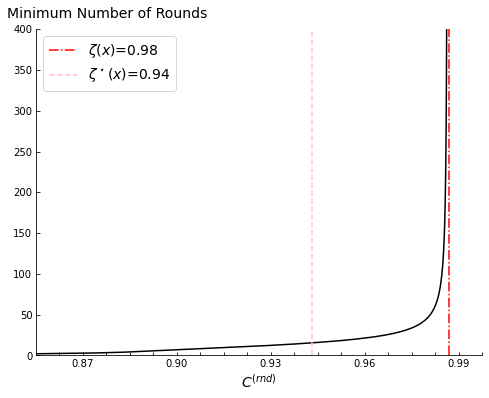}
\caption{\label{Fig:total2DPlot} Minimum number of BLOCC rounds to obtain average bipartite concurrence $C^{(\rnd)}$ from a state with $x_A:x_B:x_C=2:1:1$. Notice the gap between total random-party distillation bound $\zeta(\mbf{x})$ and $\star$-random-party distillation bound $\zeta^\star(\mbf{x})$.}
\end{figure}

\section{A Family of Maps on $\overline{\text{LOCC}}$}
The restricted Fortescue-Lo protocol described in Section \ref{Sect:Total-Random-Party-Distillation} served as an example of a map on the boundary of LOCC but not in LOCC \cite{everything}, thus proving that LOCC is not closed. 
Now we show that the operations in the step 2 of the above protocol indicate that there exist a family of maps $\{\mathfrak{J}_\gamma:0<\gamma <1\}$ on the LOCC boundary but not in LOCC, ie. $\overline{\text{LOCC}}\ni\mathfrak{J}_\gamma\notin \text{LOCC}$ for $0< \gamma <1$. \par 
We define $\mathfrak{J}_\gamma$ as follows. First fix a $0<\gamma<1$. Let Bob and Charlie both perform the measurement \begin{align}
    &M_0=\sqrt{1-\epsilon}\op{0}{0}+\op{1}{1}, &M_1=\sqrt{\epsilon}\op{0}{0}.
\end{align} If the joint outcome is one of $01$, $10$, or $11$, they stop. If the joint outcome is 00, then they repeat the same measurement. After a maximum number of $n$ iterations they stop. The $\epsilon$ and $n$ are deliberately chosen so that $(1-\epsilon)^n=\gamma$. We coarse-grain the instrument to obtain $\mathfrak{J}_{\gamma, n}=(\mathcal E_{n00}^\gamma,\mathcal E_{n01}^\gamma,\mathcal E_{n10}^\gamma,\mathcal E_{n11}^\gamma)$ where $\mathcal E_{nij}^\gamma$ include all the cases when Bob and Charlie stop upon obtaining the outcome $ij$. Thus the four maps are respectively generated by the following sets of Kraus operators: $\{M_0^n\otimes M_1^n\}, 
\{M_0^j\otimes M_1M_0^{j-1}: j=1,2,\dots,n\},
\{M_1M_0^{j-1} \otimes M_0^j: j=1,2,\dots,n\},
$ and $
\{M_1M_0^{j-1} \otimes 
M_1M_0^{j-1}: j=1,2,\dots,n\}.
$ \par 
The map $\mathfrak{J}_\gamma$ is obtained by taking $\lim_{n\rightarrow \infty} \mathfrak{J}_{\gamma,n}$. The Choi matrices $\{\Omega_{ij}^\gamma\} $ of  $\mathfrak{J}_\gamma$  can be obtained from a similar calculation as in \cite{everything}. We have\begin{align}
    &\Omega_{00}^\gamma=\begin{pmatrix}
    \gamma & \sqrt\gamma \\
    \sqrt\gamma & 1
    \end{pmatrix}^{A^\prime A}\otimes\begin{pmatrix}
    \gamma & \sqrt\gamma \\
    \sqrt\gamma & 1
    \end{pmatrix}^{B^\prime B},\\
    &\Omega_{01}^\gamma=\begin{pmatrix}
    \frac{1}{2}(1-\gamma^2) & \frac{2}{3}(1-\gamma^{3/2}) \\
    \frac{2}{3}(1-\gamma^{3/2})  & 1-\gamma
    \end{pmatrix}^{A^\prime A}\otimes\op{00}{00}^{B^\prime B},\\
    &\Omega_{01}^\gamma=\op{00}{00}^{A^\prime A}\otimes\begin{pmatrix}
    \frac{1}{2}(1-\gamma^2) & \frac{2}{3}(1-\gamma^{3/2}) \\
    \frac{2}{3}(1-\gamma^{3/2})  & 1-\gamma
    \end{pmatrix}^{B^\prime B},\\
    &\Omega_{11}^\gamma=0.
\end{align}
Thus the corresponding instruments are \begin{align}
\mathcal{E}_{ij}^\gamma:
\rho^{AB}\mapsto \tr_{A^\prime B^\prime}
\left[
\Omega_{ij}^\gamma \left(\mathbb{I}^{AB}\otimes(\rho^T)^{A^\prime B^\prime} \right)
\right]
\end{align}
It is evident that this map is on the boundary of LOCC. We argue that this map indeed does not belong to LOCC. For any $0<\gamma<1$, we consider the W-class state represented by $(x, \gamma x, \gamma x)$. The discussion in the step 2 of the previous section showed that performing the map $\mathfrak{J}_\gamma$ will optimally distill expected concurrence, thus not decreasing the value of $\zeta(\mbf{x})$. In the end the state transforms to $(\gamma^2 x, \gamma^2 x,\gamma^2 x)$. The same can never be achieved by any LOCC protocol, because we have shown that any nontrivial measurement performed by Alice or Bob will strictly decrease the expected distillable concurrence.

\subsection{EPR distillation with PPT and SEP Instruments}
So far we have discussed converting $\ket{\vec{x}}$ to an EPR state between two of the three parties using LOCC. A natural question is how this bound relates to the same probability when one only restricts the quantum instrument to being positive partial transpose (PPT) with respect to all parties. If we consider only the cases where two of the three parties end up with an EPR state or declare the outcome a failure (i.e.\ there is no outcome of distilling a W state), then we can coarse-grain the instrument to having a single round and four branches--- one for each pair of parties and one for failure. We can then express the failure probability of this instrument on an arbitrary tripartite state $\rho^{ABC}$ as a semidefinite program (SDP). 

For conciseness, let the input (resp.\ output) space be denoted $in$ (resp.\ $out$) standing for the space $ABC$ (resp.\ $A'B'C'$). For $A,B,C$, let $\Phi^{AB|C} := \op{\Phi^+}{\Phi^+}^{AB} \otimes \op{0}{0}^{C}$. For any space $B$, let $\Gamma^{B}(X^{AB}) := (\text{id}^{A} \otimes T)(X^{AB})$ where $T$ is the transpose on the space $B$. Let $\mathcal{O} := \{0,1,2,3\}$ and $\mathcal{P} = \{AA',BB',CC'\}$. Then the SDP can be expressed as
\begin{equation}\label{eqn:P_epr_sdp}
    \begin{aligned}
        \text{min.} \, & \, \tr[J_{0} (\rho^{\T} \otimes \mbb{I}^{out})] \\
        \text{s.t.} \, & \, \tr_{out}\left[\sum_{i=0}^{3} J_{i}\right] = \mbb{I}_{in} \\
        & \, \tr[\tr_{in}[J_{1}(\rho^{\T}\otimes \mbb{I}^{out})](\mbb{I}^{out} - \Phi^{AB|C})] = 0 \\
        & \, \tr[\tr_{in}[J_{2}(\rho^{\T}\otimes \mbb{I}^{out})](\mbb{I}^{out} - \Phi^{AC|B})] = 0 \\
        & \, \tr[\tr_{in}[J_{3}(\rho^{\T}\otimes \mbb{I}^{out})](\mbb{I}^{out} - \Phi^{BC|A})] = 0 \\
        & \, \Gamma^{j}(J_{i}) \geq 0 \quad \forall i \in \mathcal{O}, \forall j \in  \mathcal{P} \\
        & \, J_{i} \geq 0 \quad \forall i \in \mathcal{O} \, . 
    \end{aligned}
\end{equation}
Recalling that for a linear map $\mathcal{N}_{A \to B}$ and linear operator $X^{A}$, $\mathcal{N}(X) := \tr_{A}[J_{\mathcal{N}}(X^{\T} \otimes \mbb{I}^{B})]$ where $J_{\mathcal{N}}$ is the Choi operator, it's clear that $J_{0}$ represents the failure outcome, so the objective function is minimizing the failure probability. The next constraint guarantees coarse-graining the instrument results in a quantum channel as required. The following three constraints guarantee each success outcome is an instrument outcome that only maps the input state to an EPR state shared between two of the parties. The next line is the PPT constraints on the elements of the instruments, and the final line is simply positivity constraints on the instrument elements. Therefore we can see we have constructed an SDP for the relevant scenario.

\begin{figure}
    \centering
    \includegraphics[width=0.95\columnwidth]{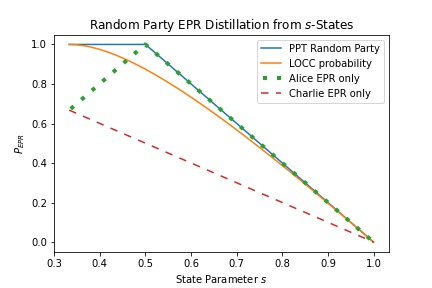}
    \caption{This depicts the gap between random party EPR distillation using a PPT (and SEP) instrument and an LOCC instrument. It also depicts the asymmetry in PPT distillation of EPR states shared between Alice and Bob or Charlie versus Charlie (resp. Bob) with Bob (resp. Charlie) or Alice. }
    \label{fig:random-party-distillation}
\end{figure}

We consider the set of $s$-states 
$$\ket{\psi_{s}}:= \sqrt{s}\ket{100} + \sqrt{\frac{1-s}{2}}(\ket{010} + \ket{001}) \ , $$
which are a specific case of $\ket{\vec{x}}$ states. Implementing the SDP \eqref{eqn:P_epr_sdp} on the $s$-states numerically using CVX \cite{cvx} with the solver MOSEK \cite{mosek} along with QETLAB \cite{qetlab} in MATLAB \cite{MATLAB:2021}, we find the gap between PPT and LOCC $P_{\text{EPR}}$ (See Fig.\ \ref{fig:random-party-distillation}), where the LOCC curve is given by $2(1-s)-(1-s)^{2}/4s$ \cite{Chitambar-2012a}. Fig.\ \ref{fig:random-party-distillation} shows a gap between LOCC and PPT success probability, as well as the optimal strategy for $s \in [1/2,1]$ establishes EPR pairs only between Alice and the other two parties. A natural further question would be if a separation between LOCC and separable operations also exists in this regime. In \cite{Chitambar-2012a}, it was shown that a SEP operations could achieve a success probability of one for $s \in [1/3,1/2]$, which means it agrees with the PPT curve in Fig. \ref{fig:random-party-distillation} in this regime. The remaining question is where a separation between LOCC and SEP exists for $s \in (1/2,1]$, as well as the possibility of a gap between SEP and PPT. We now show the above PPT curve is the same as the SEP curve. Note that Fig.\ \ref{fig:random-party-distillation}, up to numerical error, shows that for PPT operations $P_{EPR}(s) = 2(1-s)$ for $s \in [1/2,1]$. Consider $\ket{\psi_{s}}$ where $s \in [1/2,1]$. Consider Alice applying a quantum instrument to her local portion where the first outcome uses a Kraus operator $M_{0} = \op{0}{0} + \sqrt{(1-s)/s} \, \op{1}{1}$. One can verify that the state conditioned on that outcome is $\ket{\psi_{1/2}}$. Moreover, this calculation shows the probability of this outcome is $2(1-s)$. By \cite{Chitambar-2012a}, we know $\ket{\psi_{1/2}}$ can be distilled to an EPR with success probability one with SEP operations. Therefore, for $\ket{\psi_{s}}$ with $s \in [1/2,1]$, $P^{\mrm{Sep}}_{EPR} \geq 2(1-s) = P^{\mrm{PPT}}_{EPR} \geq P^{\mrm{Sep}}_{EPR}$. Thus, the PPT curve in Fig.\ \ref{fig:random-party-distillation} is the same as the SEP curve, and there exists a gap between random party EPR distillation from $\ket{\psi_{s}}$ using SEP and LOCC operations for all $s \in (1/3,1)$.

\section{Conclusion}

In this paper, we have studied how more rounds of classical communication can increase the expected bipartite entanglement yield in random-party distillation protocols.  Interestingly, the optimal number of rounds needed is highly sensitive to the type of entanglement measure considered.  For certain distillation tasks, an unbounded number of rounds is needed if the bipartite concurrent measures is optimized, whereas a finite number is needed if the $E_2$ measure is optimized.  Tight lower bounds were derived on the number of LOCC rounds required to attain a certain entanglement value.

Our work focused on W-class states since they possess a form highly amenable for analysis in the LOCC setting. In the future, it would be interesting to continue the study of round complexity for other types of multipartite entangled states.  We conjecture that high round complexity is a general feature of random-party entanglement distillation.  Intuitively, sequences of alternating weak local measurements enable a more ``gentle'' and less lossy decoupling of certain parties than when performing a hard decoupling measurement in one just one iteration of local measurements.  This leads to a greater entanglement yield when using more rounds of LOCC.  We leave it as future work to further test and develop this intuition.

\section{Acknowledgements}

E.C. thanks Hoi-Kwong Lo and Laura Man\v{c}hinska for helpful discussions in the development of this paper.  This work was supported by NSF Award No. 2016136.

%
%
%
%
\par \noindent{\LARGE\textbf{Appendices}}\
\appendix
\section{Concurrence of Bipartite W-class States}\label{appendix:a}
\par In all concurrence distillation protocols that we discuss, the bipartite outcome of any LOCC branch must be a bipartite W-class state of the form\begin{align}
    \sqrt{x_0}\ket{00}+\sqrt{x_1}\ket{01}+\sqrt{x_2}\ket{10}.
\end{align} 
This is due to the fact that the W-class is closed under LOCC. The concurrence of such state is given by $2|\sigma_1\sigma_2|$, where $\sigma_1$ and $\sigma_2$ are the Schmidt coefficients of this state. The Schmidt coefficients are the singular values of the matrix $$\begin{pmatrix}
\sqrt{x_0} & \sqrt{x_1}\\ \sqrt{x_2} &0
\end{pmatrix},$$
which are the square roots of the eigenvalues of the matrix $$\begin{pmatrix}
\sqrt{x_0} & \sqrt{x_1}\\ \sqrt{x_2} &0
\end{pmatrix}\begin{pmatrix}
\sqrt{x_0} & \sqrt{x_2}\\ \sqrt{x_1} &0
\end{pmatrix}=\begin{pmatrix}
x_0+x_1 & \sqrt{x_0x_2}\\ \sqrt{x_0x_2} & x_2
\end{pmatrix}.$$
The eigenvalues are the roots of the polynomial
$\lambda^2-(x_0+x_1+x_2)\lambda+x_1x_2=0$. From Vieta's formulas, we have $\lambda_1\lambda_2=x_1x_2$. So the concurrence is $2|\sigma_1\sigma_2|=2\sqrt{x_1x_2}.$

\section{Monotonicity of $\zeta$} 
\label{Appendix-concurrence-monotone}
\begin{lemma-Appendix}[Monotonicity part] 
The function $\zeta(\mbf{x})$ is non-increasing on average under LOCC. Furthermore, it is strictly decreasing on average when either $n_1(\mbf{x})$ or $n_2(\mbf{x})$ performs a non-trivial measurement.
\end{lemma-Appendix}

\begin{proof}
The proof is analogous to the one given in \cite{everything} which proved that $\zeta^\star$ is LOCC monotone. We can decompose any local measurement into a sequence of binary measurements \cite{AO08}\cite{OB05}. Thus each measurement can be specified by two Kraus operators $\{M_\lambda:\lambda=1,2\}$,\begin{align}
    M_\lambda:=U_\lambda\cdot\begin{pmatrix}
    \sqrt{a_\lambda} & b_\lambda \\
    0 & \sqrt{c_\lambda}
    \end{pmatrix},
\end{align}
where $a_\lambda$, $c_\lambda\geq0$, $b_\lambda\in \mathbb C$ and $U_\lambda$ is unitary (which does not affect the representation $\vec{x}$.) By the completeness relation $\sum_{\lambda=0}^1 M_\lambda^\dagger M_\lambda=I$, we have\begin{align}
    \sum_{\lambda=0}^1\begin{pmatrix}
    a_\lambda&\sqrt{a_\lambda}b_\lambda\\
    \sqrt{a_\lambda}b_\lambda^* & |b_\lambda|^2+c_\lambda
    \end{pmatrix}=\begin{pmatrix}
    1&0\\
    0&1
    \end{pmatrix},
\end{align}
thus $a_1+a_2=1$ and $c_1+c_2\leq 1$ where equality holds if and only if $b_1=b_2=0$.  \par 
To see how the state is changed by such measurement, consider when the first party Alice measures, then\begin{align}
    &\sqrt{x_0}\ket{000}+\sqrt{x_A}\ket{100}+\sqrt{x_B}\ket{010}+\sqrt{x_C}\ket{001}\nonumber \\\mapsto& 
    \frac{1}{\sqrt{p_\lambda}}[
    (\sqrt{a_\lambda x_0}+b_\lambda\sqrt{x_A})\ket{000}+\sqrt{c_\lambda x_A}\ket{100}
    \notag\\&+\sqrt{a_\lambda x_B}\ket{010}+\sqrt{a_\lambda x_C}\ket{001}
    ],
\end{align}
where $p_\lambda$ is the normalization constant that also indicates the probability of obtaining outcome $\lambda$. The post-measurement states can be represented by $\Vec{x}_\lambda=(\frac{c_\lambda}{p_\lambda}x_A, \frac{a_\lambda}{p_\lambda}x_B, \frac{a_\lambda}{p_\lambda}x_C)$ if $c_\lambda x_A\geq a_\lambda x_B$, otherwise the ordering should be changed such that $\Vec{x}_\lambda=(\frac{a_\lambda}{p_\lambda}x_B,
\frac{c_\lambda}{p_\lambda}x_A,  \frac{a_\lambda}{p_\lambda}x_C)$. More generally, when party $K\in\{A,B,C\}$ measures and outcome $\lambda$ occurs, the components of the representing vector $\Vec{x}_\lambda$ are $\{\frac{c_\lambda}{p_\lambda}x_K, \frac{a_\lambda}{p_\lambda}x_I, \frac{a_\lambda}{p_\lambda}x_J\}$ after sorting in decreasing order, where $I,J\in\{A,B,C\}\setminus \{K\}$.\par 
Define the average change in $\zeta$ incurred by the measurement as \begin{align}
    \overline{\Delta \zeta}:=p_1\zeta(\Vec{x}_1)+p_2\zeta(\Vec{x}_2)-\zeta(\Vec{x}).
\end{align}

We will show that $\overline{\Delta \zeta}\leq 0$ for all six possible cases.\\
\textbf{Case (i)} Alice measures when $x_A>x_B\geq x_C$.\par 
Since any measurement can be decomposed as a sequence of weak measurements (ie. ones where $a_1\approx c_1$ and $a_2\approx c_2$) \cite{OB05}\cite{OB06}, without loss of generality we can assume the measurement is sufficiently weak so that $c_\lambda x_A\geq a_\lambda x_B$ is guaranteed. Thus the post-measurement states can be represented by $p_\lambda \Vec{x}_\lambda=(c_\lambda x_A, a_\lambda x_B, a_\lambda x_C)$, and $\deltac$ is calculated to be\begin{align}
    \deltac=&\left(2\sqrt{x_Ax_B}+\frac{2}{3}x_C\sqrt{\frac{x_A}{x_B}}\right)
    (\sqrt{a_1c_1}+\sqrt{a_2c_2}-1)\nonumber \\
    &+\frac{1}{3}\frac{x_Bx_C}{x_A}\left(\frac{a_1^2}{c_1}+\frac{a_2^2}{c_2}-1\right).\label{eq:casei}
\end{align}
A necessary condition for $\deltac$ to be maximized is that it is maximized over allowed values of $c_2$ when $a_1$, $a_2$ and $c_1$ are fixed. Taking the partial derivative, we have 
\begin{align}
    \pd{\deltac}{c_2}=&\sqrt{\frac{a_2}{c_2}}\left(
    \sqrt{x_Ax_B}+\frac{x_C}{3}\sqrt{\frac{x_A}{x_B}}-
    \frac{x_C}{3}\frac{a_2x_B}{c_2x_A}\sqrt{\frac{a_2}{c_2}}
    \right)\nonumber \\
    \geq& \sqrt{\frac{a_2}{c_2}}\left[
    \sqrt{x_Ax_B}+\frac{x_C}{3}\left(\sqrt{\frac{x_A}{x_B}}-
    \sqrt{\frac{a_2}{c_2}}\right)
    \right]\geq 0,
\end{align}
where inequalities follow from the weakness constraint.
Thus, for fixed $a_1$, $a_2$ and $c_1$, the value of $\deltac$ is maximized when $c_2=1-c_1$. Substitute this into (\ref{eq:casei}) and evaluate around the point $(a_1, c_1)=(1/2, 1/2),$ we expand to the lowest order of $(a_1-1/2)$ and $(c_1-1/2)$ and obtain\begin{align}
    \deltac\approx -\left(\sqrt{x_Ax_B}+\frac{x_C}{3}\sqrt{\frac{x_A}{x_B}}-\frac{4}{3}\frac{x_Bx_C}{x_A}\right)(a_1-c_1)^2.
\end{align}
Let $\gamma=\sqrt{\frac{x_B}{x_A}}$, then $\gamma\leq 1$, and \begin{align}
    \deltac\approx -\left[
    \sqrt{x_Ax_B} +\frac{x_C}{3}\sqrt{\frac{x_A}{x_B}}\left(1-4\gamma^3\right)
    \right](a_1-c_1)^2.
\end{align}
If $1-4\gamma^3\geq 0$, then $\deltac\leq0$ holds.\\
If $1-4\gamma^3< 0$, then from $x_C\leq x_B$ and $\gamma \leq 1$, \begin{align}
    \deltac 
    &\leq -\sqrt{x_Ax_B}\cdot\frac{4}{3}\cdot(1-\gamma^3)(a_1-c_1)^2\leq 0,
\end{align}
with equality obtained only by the trivial measurement where $a_1=c_1$ and $a_2=c_2$. \\
\textbf{Case (ii)} Bob measures when $x_A>x_B>x_C$.\par 
WLOG, we assume that the measurement is weak enough so that $a_\lambda x_A\geq c_\lambda x_B \geq a_\lambda x_C$. Then the post-measurement states can be represented by $p_\lambda \Vec{x}_\lambda=(a_\lambda x_A, c_\lambda x_B, a_\lambda x_C)$, and $\deltac$ is calculated to be
\begin{align}
    \deltac&=\deltacs +\frac{1}{3}\frac{x_Bx_C}{x_A}(c_1+c_2-1),
\end{align}
where $\deltacs:=p_1\zeta^\star(\Vec{x}_1)+p_2\zeta^\star(\Vec{x}_2)-\zeta^\star(\Vec{x})$ with Alice being the $\bigstar$ party. Since $\zeta^\star$ have been shown to be LOCC monotone \cite{everything}, we have $\deltacs\leq0$. And since $c_1+c_2\leq 1$, $\deltac\leq0$ holds in this case with equality obtained only by the trivial measurement where $a_1=c_1$ and $a_2=c_2$.\\
\textbf{Case (iii)} Charlie measures when $x_A\geq x_B>x_C$.\par 
WLOG, we assume that the measurement is weak enough so that $a_\lambda x_B\geq c_\lambda x_C $. Then the post-measurement states can be represented by $p_\lambda \Vec{x}_\lambda=(a_\lambda x_A, a_\lambda x_B, c_\lambda x_C)$, and $\deltac$ is calculated to be
\begin{align}
    \deltac&=\deltacs +\frac{1}{3}\frac{x_Bx_C}{x_A}(c_1+c_2-1),
\end{align}
Again, $\deltac\leq0$ holds in this case. But since Charlie correspond to the party $n_2$ when calculating $\deltacs$, non-trivial measurement performed by Charlie does not necessarily decrease $\zeta^\star$ \cite{everything}, and equality holds whenever $c_1+c_2-1=0$.\\
\textbf{Case (iv)} Bob measures when $x_A> x_B=x_C$.\par 
WLOG, we assume that $a_1\leq c_1$ (thus $a_2\geq c_2$.) We also assume that the measurement is weak enough so that $a_1 x_A\geq c_1 x_B $. Then the post-measurement states can be represented by \begin{align}
    &p_1 \Vec{x}_1=(a_1 x_A, c_1 x_B, a_1 x_B),\\
    &p_2 \Vec{x}_2=(a_2 x_A, a_2 x_B, c_2 x_B),
\end{align} and $\deltac$ is calculated to be
\begin{align}
    \deltac&=\deltacs +\frac{1}{3}\frac{x_Bx_C}{x_A}(c_1+c_2-1),
\end{align}
Again, $\deltac\leq0$ holds in this case with equality obtained only by the trivial measurement where $a_1=c_1$ and $a_2=c_2$.\\
\textbf{Case (v)} Alice measures when $x_A= x_B>x_C$.\par 
WLOG, we assume that $a_1\leq c_1$ (thus $a_2\geq c_2$.) We also assume that the measurement is weak enough so that $c_2 x_A\geq a_2 x_B $. Then the post-measurement states can be represented by \begin{align}
    &p_1 \Vec{x}_1=(c_1 x_A, a_1 x_A, a_1 x_C),\\
    &p_2 \Vec{x}_2=(a_2 x_A, c_2 x_A, a_2 x_C),
\end{align} and $\deltac$ is calculated to be
\begin{align}
    \deltac&=\deltacs +\frac{1}{3}\frac{x_C}{c_1}(a_1^2-c_1^2),
\end{align}
Again, $\deltac\leq0$ holds in this case with equality obtained only by the trivial measurement where $a_1=c_1$ and $a_2=c_2$.\\
\textbf{Case (vi)} Alice measures when $x_A= x_B=x_C=x$.\par 
WLOG, we assume that $a_1\leq c_1$ (thus $a_2\geq c_2$.) Then the post-measurement states can be represented by \begin{align}
    &p_1 \Vec{x}_1=(c_1 x, a_1 x, a_1 x),\\
    &p_2 \Vec{x}_2=(a_2 x, a_2 x, c_2 x),
\end{align} and $\deltac$ is calculated to be
\begin{align}
    \deltac&=\deltacs +\frac{1}{3}\frac{x_C}{c_1}(a_1^2-c_1^2),
\end{align}
Again, $\deltac\leq0$ holds in this case with equality obtained only by the trivial measurement where $a_1=c_1$ and $a_2=c_2$.\\
\end{proof}
\bibliography{bib}
\bibliographystyle{plainurl}
\end{document}